\theoremstyle{definition}
\newtheorem{theorem}{Theorem}[section]
\newtheorem{lemma}[theorem]{Lemma}
\newtheorem{proposition}[theorem]{Proposition}
\newtheorem{example*}[theorem]{Example*}
\newtheorem{examples*}[theorem]{Examples*}
\newtheorem{remark*}[theorem]{Remark*}
\newtheorem*{theorem*}{Theorem}
\newtheorem*{corollary*}{Corollary}
\newtheorem*{lemma*}{Lemma}
\newtheorem*{proposition*}{Proposition}
\tikzstyle{dot}=[inner sep=0.3mm, minimum width=2mm, minimum height=2mm, draw, shape=circle, font={\footnotesize}, tikzit fill=magenta]
\tikzstyle{white dot}=[dot, fill=white, text depth=-0.2mm, tikzit category=ZH-pf]
\tikzstyle{gray dot}=[dot, fill={rgb,255: red,128; green,128; blue,128}, text depth=-0.2mm, tikzit category=ZH-pf]
\tikzstyle{gray phase dot}=[gray dot, fill={rgb,255: red,128; green,128; blue,128}, tikzit fill=magenta]
\tikzstyle{hadamard}=[fill=white, draw, inner sep=0.6mm, minimum height=1.5mm, minimum width=1.5mm, shape=rectangle, tikzit shape=rectangle, tikzit category=ZH-pf]
\tikzstyle{small hadamard}=[fill=white, draw, inner sep=0.6mm, minimum height=1.5mm, minimum width=1.5mm, tikzit shape=rectangle]
\tikzstyle{halfscalar}=[star, fill=black, draw=black, minimum size=8pt, inner sep=0pt]
\tikzstyle{box}=[shape=rectangle, text height=1.5ex, text depth=0.25ex, yshift=0.2mm, fill=white, draw=black, minimum height=3mm, minimum width=5mm, font={\small}]
\tikzstyle{Z dot}=[inner sep=0mm, minimum size=2mm, shape=circle, draw=black, fill=zx_green]
\tikzstyle{Z phase dot}=[minimum size=5mm, font={\footnotesize\boldmath}, shape=rectangle, rounded corners=2mm, inner sep=0.2mm, outer sep=-2mm, scale=0.8, tikzit shape=circle, draw=black, fill=zx_green, tikzit draw=blue]
\tikzstyle{X dot}=[Z dot, shape=circle, draw=black, fill=zx_red]
\tikzstyle{X phase dot}=[Z phase dot, tikzit shape=circle, tikzit draw=blue, fill=zx_red, font={\footnotesize\color{black}\boldmath}]
\tikzstyle{zxnode}=[inner sep=0.3mm, minimum width=2mm, minimum height=2mm, draw, shape=circle, font={\footnotesize}]
\tikzstyle{gn}=[zxnode, fill=zx_green]
\tikzstyle{rn}=[zxnode, fill=zx_red]
\tikzstyle{H box}=[rectangle, fill=yellow, draw=black, xscale=1, yscale=1, font={\small}, inner sep=0.75pt, minimum width=0.15cm, minimum height=0.15cm, tikzit shape=rectangle]
\tikzstyle{ug}=[regular polygon, regular polygon sides=3, fill=zx_red, draw=black, inner sep=0pt, minimum width=1em, tikzit draw=blue]
\tikzstyle{st}=[star, star points=5, fill=white, draw=black, inner sep=1.2pt, line width=1.2pt, tikzit fill=blue, tikzit draw=red, tikzit category=ZH-pf]
\tikzstyle{triangle}=[regular polygon, regular polygon sides=3, fill=white, draw=black, inner sep=0pt, minimum width=1em, tikzit draw=blue, tikzit category=ZH-pf]
\tikzstyle{not}=[fill={rgb,255: red,128; green,128; blue,128}, draw=black, shape=circle, font={$\neg$}, dot]
\tikzstyle{gray}=[-, draw={blue!60!white}, tikzit draw=blue]
\tikzstyle{blue}=[-, draw={blue!60!white}, tikzit draw=blue]
\tikzstyle{brace edge}=[-, tikzit draw=blue, decorate, decoration={brace,amplitude=1mm,raise=-1mm}]
\tikzstyle{diredge}=[->]
\newcommand\etc{etc\@ifnextchar.{}{.\@}\xspace}
\newcommand{\intf}[1]{\left\llbracket #1 \right\rrbracket} 
\newcommand{\bra}[1]{\ensuremath{\left\langle #1 \right|}}
\newcommand{\ket}[1]{\ensuremath{\left|  #1 \right\rangle}}
\newcommand{\ketbra}[2]{\ensuremath{\ket{#1}\!\bra{#2}}}
\newcommand{\C}{\mathbb{C}}
\newcommand{\N}{\mathbb{N}}
\newcommand{\zh}{\text{ZH}}
\tikzstyle{dotpic}=[] 
\tikzstyle{semilarge box}=[rectangle,inline text,fill=white,draw,minimum height=5mm,yshift=-0.5mm,minimum width=12.5mm,font=\small]
\tikzstyle{inline text}=[text height=1.5ex, text depth=0.25ex,yshift=0.5mm]
\tikzstyle{label}=[font=\footnotesize,text height=1.5ex, text depth=0.25ex]
\tikzstyle{white phase dot}=[white dot]
\newcommand{\dotunit}[1]{%
\,\begin{tikzpicture}[dotpic,yshift=1.5mm]
\node [#1] (a) at (0,-0.35) {}; 
\draw (a)--(0,0.3);
\end{tikzpicture}\,\xspace}
\newcommand{\dotmult}[1]{%
\,\begin{tikzpicture}[dotpic]
    \node [#1] (a) {};
    \draw (a) -- (90:0.55);
    \draw (a) (-45:0.6) -- (a);
    \draw (a) (-135:0.6) -- (a);
\end{tikzpicture}\,\xspace}
\newkeycommand{\phase}[style=white phase dot][1]{\,\begin{tikzpicture}
    \begin{pgfonlayer}{nodelayer}
        \node [style=none] (0) at (0, 0.6) {};
        \node [style=\commandkey{style}] (2) at (0, -0) {$#1$}; 
        \node [style=none] (3) at (0, -0.6) {};
    \end{pgfonlayer}
    \begin{pgfonlayer}{edgelayer}
        \draw (2) to (0.center);
        \draw (3.center) to (2);
    \end{pgfonlayer}
\end{tikzpicture}\,}
\newcommand{\gendiagram}[1]{
\begin{tikzpicture}
    \begin{pgfonlayer}{nodelayer}
        \node [style=none] (0) at (-0.75, 1) {};
        \node [style=none] (1) at (0.75, 1) {};
        \node [style=none] (2) at (-0.75, -1) {};
        \node [style=none] (3) at (0.75, -1) {};
        \node [style=none] (4) at (0, 0.75) {$\ldots$};
        \node [style=semilarge box] (5) at (0, -0) { #1 };
        \node [style=none] (6) at (0, -0.75) {$\ldots$};
    \end{pgfonlayer}
    \begin{pgfonlayer}{edgelayer}
        \draw (0.center) to (2.center);
        \draw (1.center) to (3.center);
    \end{pgfonlayer}
\end{tikzpicture}
}
\newcommand{\hadastate}[1]{\,\tikz{\node[style=hadamard] (x) {$#1$};\draw(x)--(0,0.75);}\,}
\newcommand{\hadaunit}{\dotunit{small hadamard}}
\newcommand{\hadamult}{\dotmult{small hadamard}}
\newcommand{\graymult}{\dotmult{gray dot}}
\newcommand{\grayphase}[1]{\phase[style=gray phase dot]{#1}}
\def\titlerunning{Completeness of the Phase-free ZH-calculus}
\title{\titlerunning}
\author{John van de Wetering
\institute{Radboud Universiteit}
\email{john@vdwetering.name} \and
    Sal Wolffs
\institute{Radboud Universiteit}
\email{sal.wolffs@gmail.com}
}
\begin{document}
\maketitle

\begin{abstract}
    The ZH-calculus is a graphical calculus for linear maps between qubits that allows a natural representation of the Toffoli+Hadamard gate set. The original version of the calculus, which allows every generator to be labelled by an arbitrary complex number, was shown to be complete by Backens and Kissinger. Even though the calculus is complete, this does not mean it allows one to easily reason in restricted settings such as is the case in quantum computing. In this paper we study the fragment of the ZH-calculus that is phase-free, and thus is closer aligned to physically implementable maps. We present a modified rule-set for the phase-free ZH-calculus and show that it is complete. We further discuss the minimality of this rule-set and we give an intuitive interpretation of the rules. Our completeness result follows by reducing to Vilmart's rule-set for the phase-free $\dzx$-calculus.
\end{abstract}

\section{Introduction}
Graphical languages are a powerful way to reason about quantum processes~\cite{CKbook}. The arbitrary connectivity of the generators make it perfectly suited for describing highly compositional tasks, such as measurement-based quantum computing~\cite{DP2}, lattice surgery~\cite{horsman2017surgery}, error-correcting codes~\cite{chancellor2016graphical} and circuit optimisation~\cite{FaganDuncan,cliff-simp,optimisation-paper}.

There are a variety of graphical languages in existence, each of which has different strengths and weaknesses. The most well-known is the ZX-calculus, which was introduced more than 10 years ago~\cite{CD1,CD2}. The generators are based on a pair of strongly complementary bases of the qubit, and this allows easy representation of stabilizer computation with arbitrary phase-gates. Another language, the ZW-calculus has its generators based on the two different types of entanglement that are possible in tripartite qubit systems. A variation of this language has found use in describing fermionic quantum computing~\cite{hadzihasanovic2018diagrammatic}.

The graphical calculus we study in this paper is the ZH-calculus, which was introduced last year by Backens and Kissinger~\cite{backens2018zhcalculus}. In this calculus, a new generator, the H-box, was introduced that generalised the Hadamard gate to arbitrary arity. Whereas the ZX-calculus allows easy representation of stabilizer and phase gates, the ZH-calculus allows one to easily represent the Hadamard+Toffoli gate set.

An important property for any graphical language is \emph{completeness}. Every diagram can be interpreted as a linear map between qubits. The language is complete when two diagrams can be rewritten into one another whenever they represent the same linear map. The completeness of the ZX-calculus was an open problem for several years~\cite{Witt:2014aa}, until finally a complete rule-set was found~\cite{SimonCompleteness,JPV-universal,ng2017universal}. The proof of completeness was based on the completeness of the ZW-calculus~\cite{HarnyAmarCompleteness,hadzihasanovic2017algebra}. The paper that introduced the ZH-calculus also immediately proved that the rule-set for this calculus was complete~\cite{backens2018zhcalculus}. 

The completeness proof of the ZH-calculus intensively used  the ability to label an H-box with an arbitrary complex number, which allows for an easy representation of a particular matrix normal-form. H-boxes labelled by most complex numbers unfortunately have no physical interpretation. So if we wish to use the ZH-calculus to reason, for instance, about quantum circuits, particularly those in the Hadamard+Toffoli gate set, then it would be useful to have a complete calculus of the restricted ZH-calculus where H-boxes are not labelled. We will refer to this restricted language as the \emph{phase-free ZH-calculus}.

In this paper we give a rule-set for the phase-free ZH-calculus that is complete. Our rule-set is made by taking all the rules of the full ZH-calculus and stripping away those rules that are trivial in the phase-free setting, and then adding two new rules (see Figure~\ref{fig:phasefree-rules}). The first is a simple rule to deal with zero scalars, but the second is more interesting. It relates two different ways of constructing the AND gate. These two constructions use H-boxes~\cite{backens2018zhcalculus}, respectively the \emph{triangle} of the ZX-calculus~\cite{vilmart2018zxtriangle}. This rule bridges the gap between the ZH-calculus and the ZX-calculus augmented with this triangle generator. As a result, we can use the completeness of Ref.~\cite{vilmart2018zxtriangle} to prove completeness of the phase-free ZH-calculus.

We also study the minimality of our rule set, and give an interpretation of most of the rules in terms of interactions of standard classical maps. Concerning this minimality, we conjecture that our new rules will actually prove not to be necessary. Hence, even though this new rule allows us to simplify quite a few proofs, we venture to prove as many results without using it.

The next section briefly recalls the definition of the ZH-calculus from Ref.~\cite{backens2018zhcalculus}. We then present the modified phase-free ZH-calculus in Section~\ref{sec:phase-free}, where we will also give a list of derived rewrite rules that we believe to be useful when reasoning about ZH-diagrams. The completeness proof is presented in Section~\ref{sec:completeness}, while the discussion on the minimality and interpretation of the rules happens in Section~\ref{sec:minimality}. Finally, a few concluding remarks are given in Section~\ref{sec:conclusion}

\section{The ZH-calculus}

The ZH-calculus is a graphical language expressing operations as \emph{string diagrams}.
These are diagrams consisting of dots or boxes, connected by wires.
Wires can also have one or two ends not connected to a dot or box. These represent inputs and outputs of the diagram when they exit towards the top respectively the bottom of the diagram.

Diagrams in the ZH-calculus have two types of generators, the \emph{Z-spiders} represented by white dots, and \emph{H-boxes} represented by white boxes. In the full ZH-calculus, the H-boxes are further labelled with a complex number.

These generators are interpreted as linear maps in the following manner:
\begin{equation*}
\intf{\tikzfig{Z-spider}} := \ket{0}^{\otimes n}\bra{0}^{\otimes m} + \ket{1}^{\otimes n}\bra{1}^{\otimes m} \qquad\qquad
 \intf{\tikzfig{H-spider}} := \sum a^{i_1\ldots i_m j_1\ldots j_n} \ket{j_1\ldots j_n}\bra{i_1\ldots i_m}
\end{equation*}

 where $\intf{\cdot}$ denotes the map from diagrams to matrices.

The sum in the second equation is over all $i_1,\ldots, i_m, j_1,\ldots, j_n\in\{0,1\}$ so that an H-box represents a matrix with all entries equal to 1, but for one element which is equal to $a$.
If the label of the H-box is~$-1$, then we do not write it, e.g.\ $\hadaunit:=\hadastate{\text{-}1}$.
Straight and curved wires have the following interpretations:
\begin{equation*}
\intf{\;|\;} := \ketbra{0}{0}+\ketbra{1}{1} \qquad\qquad\qquad
 \intf{\tikzfig{wire-cup}} := \ket{00}+\ket{11} \qquad\qquad\qquad
 \intf{\tikzfig{wire-cap}} := \bra{00}+\bra{11}.
\end{equation*}

When two diagrams are juxtaposed, the corresponding linear map is the tensor product of the individual diagrams, while a sequential composition of two diagrams is interpreted as the matrix product of the corresponding matrices:
\[
 \intf{\gendiagram{$D_1$}\;\gendiagram{$D_2$}} := \intf{\gendiagram{$D_1$}}\otimes\intf{\gendiagram{$D_2$}} \qquad\qquad \intf{\tikzfig{sequential-composition}} := \intf{\gendiagram{$D_2$}}\circ\intf{\gendiagram{$D_1$}}
\]

To improve the presentation of the diagrams, we also use a few derived generators. The first two are \emph{grey spiders} and a spider with a \emph{NOT} on it:
\begin{equation*}
\tikzfig{X-spider-dfn} \qquad\qquad \tikzfig{negate-dfn}
\end{equation*}

The generator \graymult\ acts as XOR on the computational basis while \grayphase{\neg} acts as NOT:
\begin{equation*}
\intf{\graymult} = \ketbra{0}{00}+\ketbra{0}{11}+\ketbra{1}{01}+\ketbra{1}{10} \qquad\qquad\qquad \intf{\grayphase{\neg}}=\ketbra{0}{1}+\ketbra{1}{0}.
\end{equation*}

The ZH-calculus comes with a set of rewrite rules that was originally presented in Ref.~\cite{backens2018zhcalculus}. See Figure~\ref{fig:ZH-rules}. Additionally, the calculus has the meta-rule that \emph{only topology matters}. This means that two diagrams are considered equal when one can be topologically deformed into the other, while respecting the order of the inputs and outputs. Finally, the two generators are considered to be symmetric and undirected, so that the following equations also hold:
\ctikzfig{generator-symmetries}
These symmetry properties also hold for the derived grey spider and NOT gate.

\begin{figure}[!ht]
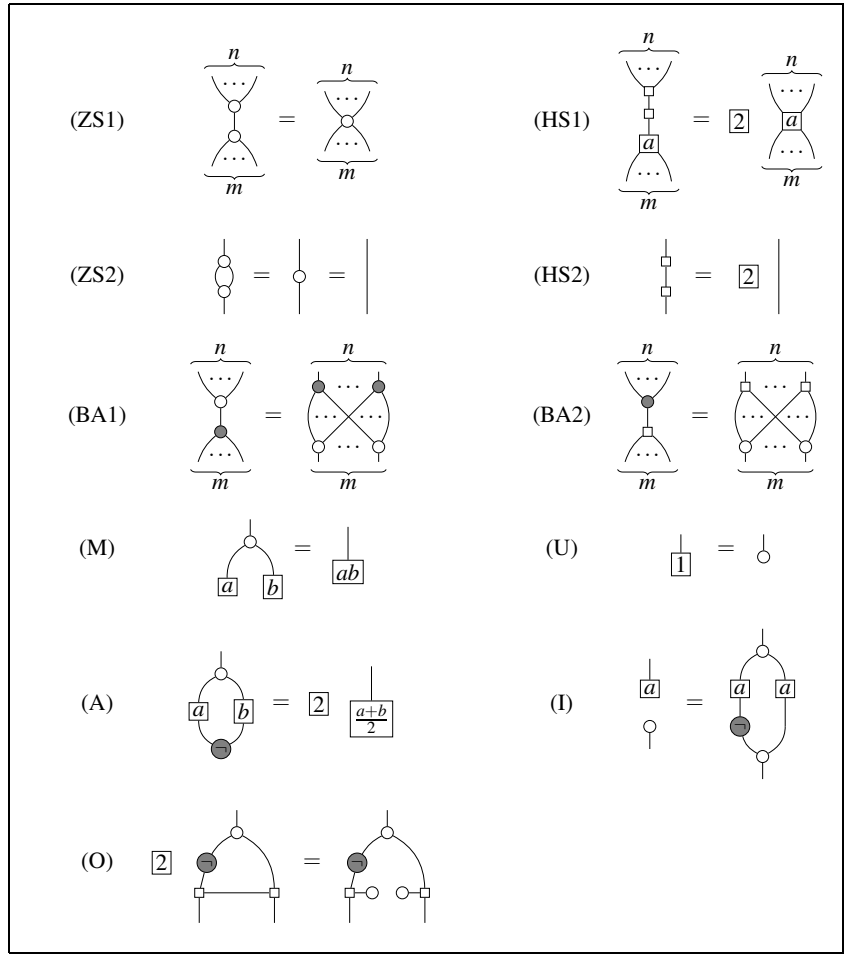

 \centering
 \scalebox{0.8}{%
 \begin{tabular}{|ccccc|}
 \hline 
 &&&&\\
  \qquad(ZS1) & \tikzfig{Z-spider-rule} & \qquad\qquad & (HS1) & \tikzfig{H-spider-rule-phased} \\ &&&& \\
  \qquad(ZS2) & \tikzfig{Z-special-phased} & & (HS2) & \tikzfig{H-identity-phased} \\ &&&& \\
  \qquad(BA1) & \tikzfig{ZX-bialgebra} & & (BA2) & \tikzfig{ZH-bialgebra} \qquad\\ &&&& \\
  \qquad(M) & \tikzfig{multiply-rule-phased} & & (U) & \tikzfig{unit-rule} \\ &&&& \\
  \qquad(A) & \tikzfig{average-rule} & & (I) & \tikzfig{intro-rule} \qquad\\ &&&& \\
  \qquad(O) & \tikzfig{ortho-rule-phased} & & & \\
  &&&&\\
  \hline
 \end{tabular}}
 \caption{The original set of rules for the ZH-calculus as presented in Ref.~\cite{backens2018zhcalculus}.
 Throughout, $m,n$ are nonnegative integers and $a,b$ are arbitrary complex numbers.
 The right-hand sides of both \textit{bialgebra} rules (BA1) and (BA2) are complete bipartite graphs on $(m+n)$ vertices, with an additional input or output for each vertex.
 The horizontal edges in equation (O) are well-defined because only the topology matters and we do not need to distinguish between inputs and outputs of generators. The rules (M), (A), (U), (I), and (O) are pronounced \textit{multiply}, \textit{average}, \textit{unit}, \textit{intro}, and \textit{ortho}, respectively.}
 \label{fig:ZH-rules}
\end{figure}

As shown in Ref.~\cite{backens2018zhcalculus}, the rules of Figure~\ref{fig:ZH-rules} and the meta-rules are \emph{sound} for the interpretation functor $\intf{\cdot}$, meaning that if two diagrams are equal with respect to these rules, then the linear maps they represent are also equal. Furthermore, ZH-diagrams are \emph{universal} for linear maps between qubits, meaning that for any linear map $f:(\C^2)^{\otimes n} \rightarrow (\C^2)^{\otimes m}$ we can find a ZH-diagram $D$ such that $\intf{D} = f$.

\section{The phase-free ZH-calculus}\label{sec:phase-free}

In this paper we will consider ZH-diagrams where the label of an H-box can only be $-1$. We will refer to ZH-diagrams like that as \emph{phase-free}. In order to state the definition of the grey spider and the NOT, which contain a label of $\frac12$, we introduce the generator \emph{star}, which has zero inputs and outputs. It's interpretation is:
\begin{equation*}
    \intf{\tikzfig{star}} := \frac{1}{\sqrt{2}}
\end{equation*}

We can then define the grey spider and the spider with a NOT as:
\begin{equation*}\label{eq:defx}
(X) \quad\ \  \tikzfig{X-spider-dfn-free}\qquad\qquad\qquad (NOT)\quad\ \  \tikzfig{negate-dfn-free}
\end{equation*}

Going beyond the definition of the original paper introducing the ZH-calculus, we also introduce the derived \emph{negate} generator (a white NOT), and the \emph{triangle}:

\begin{equation}\label{eq:Z-triangle-dfn}
 (Z) \quad\ \ \tikzfig{negate-white-dfn}\qquad \qquad \qquad (\Delta)\quad\ \ \tikzfig{triangle-dfn}
\end{equation}
The negate white spider with one input and output acts like the Z gate, while the triangle is the same map as in the ZX-calculus~\cite{vilmart2018zxtriangle}:
\begin{equation*}
\intf{\phase{\neg}} = \ketbra{0}{0} - \ketbra{1}{1} \qquad\qquad\qquad \intf{\tikzfig{triangle-white}} = \ketbra{0}{0}+\ketbra{1}{0} + \ketbra{0}{1}
\end{equation*}

As the triangle is not symmetric, we define its transpose as:
\ctikzfig{triangle-flip-dfn}

Let us consider how the rules of the full calculus shown in Figure~\ref{fig:ZH-rules} can be lifted to the phase-free calculus. Of the original rules, (HS1), (M), (A) and (I) allow arbitrary complex labels to appear. We make them phase-free, by simply restricting these rules to only apply to a label of $-1$. As it turns out, in that setting (A) and (I) are provable using the other rules, and hence we will not require them.
The rules (HS1), (HS2) and (O) include a label of 2. To make them phase-free we use a single Z-spider with zero legs to represent a scalar of 2. The rule (U) is unnecessary in the phase-free fragment.

To show completeness of the phase-free calculus we need to add two new rules. One rule simply states that the star generator can be removed when multiplied with the zero scalar.  The second rule is more interesting. It tells us that two different ways of representing an AND-gate, using either H-boxes or triangles, are the same thing:
\ctikzfig{and-rule}
That the left-hand side of this equation is equal to an AND gate was established in the original ZH-calculus paper and can quite easily be verified by direct calculation. The right-hand side represents the original way in which the AND gate was constructed in the ZX-calculus~\cite[Exercise 12.10]{CKbook}.
Please see Figure~\ref{fig:phasefree-rules} for an overview of the rules of the phase-free fragment.

\begin{figure}[!bt]
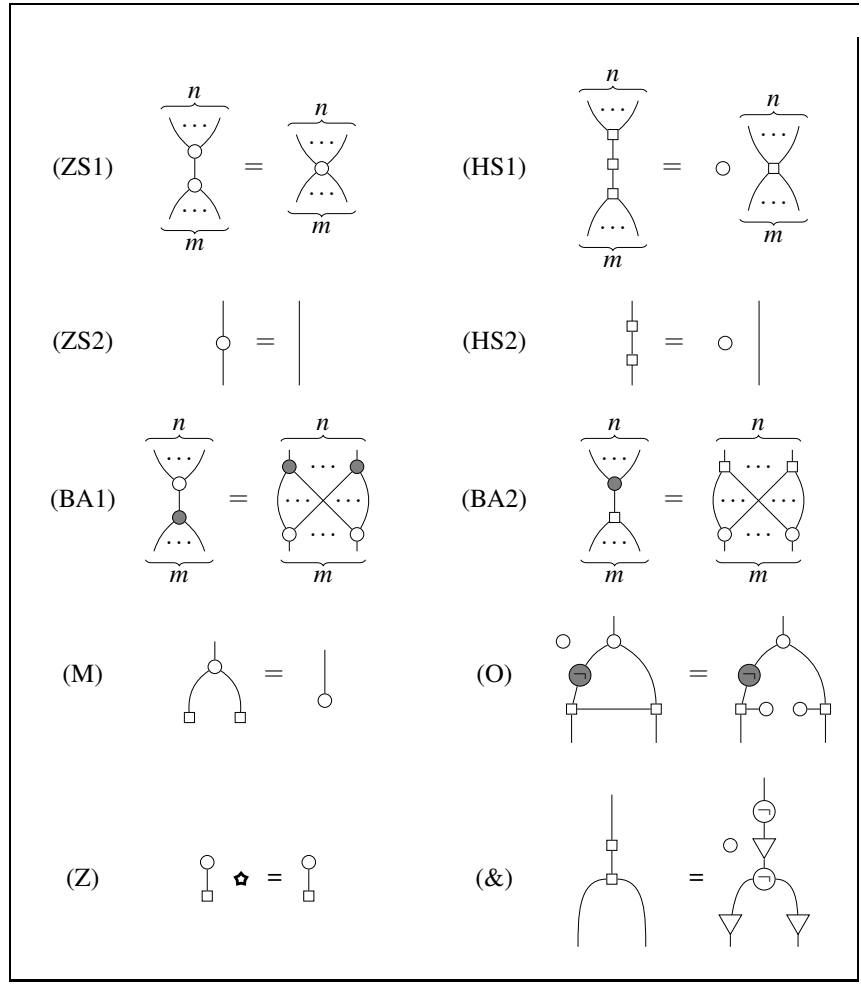

 \centering
 \scalebox{0.9}{%
 \begin{tabular}{|ccccc|}
 \hline \\
 &&&&\\
  \quad (ZS1) & \tikzfig{Z-spider-rule} &\quad \qquad & (HS1) & \tikzfig{H-spider-rule} \quad \\ &&&& \\
  \quad(ZS2) & \tikzfig{Z-special} & & (HS2) & \tikzfig{H-identity} \quad\\ &&&& \\
  \quad(BA1) & \tikzfig{ZX-bialgebra} & & (BA2) & \tikzfig{ZH-bialgebra} \quad\\ &&&& \\
  \quad(M) & \tikzfig{multiply-rule} & & (O) & \tikzfig{ortho-rule} \quad\\&&&& \\
  \quad(Z) & \tikzfig{star-zero-rule} & & (\&) & \tikzfig{and-rule} \quad\\
  &&&&\\
 \hline
 \end{tabular}}
 \caption{The rules for the phase-free ZH-calculus.
 Throughout, $m,n$ are nonnegative integers.
 The right-hand sides of both \textit{bialgebra} rules (BA1) and (BA2) are complete bipartite graphs on $(m+n)$ vertices, with an additional input or output for each vertex.
 The horizontal edges in equation (O) are well-defined because only the topology matters and we do not need to distinguish between inputs and outputs of generators. n.b. the rules (M), (O), (Z) and (\&) are pronounced \textit{multiply}, \textit{ortho}, \textit{zero} and \textit{and} respectively.}
 \label{fig:phasefree-rules}
\end{figure}

\subsection{Derived rules}

The original paper on the ZH-calculus only proved those results that were necessary to prove completeness. As a result, only a small amount of useful derived rewrite rules were presented in that paper. In this section, we aim to remedy this by providing a list of derived rewrite rules, that we believe to be useful for reasoning in the ZH-calculus. The proofs for these derivations can be found in Appendix~\ref{sec:derived-rules}. We label these derived rules with letters in order to more easily refer to them when used in the Appendix.

\begin{lemma}\label{lem:scalarcancel}Scalars can be introduced and cancelled in the following ways:
    \begin{equation*}
        (S) \qquad \tikzfig{scalar-rule} \qquad\qquad 
    \end{equation*}
\end{lemma}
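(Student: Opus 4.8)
The plan is to derive $(S)$ directly from the axioms of Figure~\ref{fig:phasefree-rules}, as these are advertised as \emph{derived} rewrite rules and at this stage of the development completeness is not yet available. First I would unfold every derived generator occurring in the statement using the definitions \GreyDef\ and \NotDef, together with the reading of the star as the scalar $\tfrac1{\sqrt2}$, so that the claimed equalities become statements about diagrams built only from Z-spiders, H-boxes and stars. Because the star carries no inputs or outputs, the ``only topology matters'' meta-rule lets me slide it anywhere in the diagram, so the actual task is to bring the stars next to the part of the diagram where they can be fused away or cancelled against a $0$-legged Z-spider (which denotes the scalar $2$).

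The main steps, in order, would be: (i) fuse adjacent Z-spiders with \SpiderRule\ and adjacent H-boxes with \HFuseRule, and use \HHRule\ to remove arity-two H-box identities, collapsing each side to a small normal form; (ii) use \HCompRule\ and \IDRule\ to interchange H-boxes and Z-spiders where appropriate and to expose $0$-legged Z-spiders; (iii) recognise that what is left is the elementary scalar bookkeeping $\tfrac1{\sqrt2}\cdot\tfrac1{\sqrt2}\cdot 2 = 1$, i.e. that two stars together with a $0$-legged Z-spider reduce to the empty diagram. I would prove the ``cancelling'' direction first and obtain ``introducing'' for free by reading the derivation backwards, since every rule used is an equation.

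The step I expect to be the genuine obstacle is (iii), because among the axioms the star is constrained only by \ZeroRule, which merely says how a star interacts with the zero scalar, so there is no axiom that directly annihilates a pair of stars. To get around this I would route through the $0$-legged grey spider: by \GreyDef\ it equals a fixed configuration containing two stars and an H-box, while it can also be evaluated independently using \StrongCompRule, \HCompRule, \IDRule\ and the spider rules; equating the two descriptions pins down the value of ``two stars'' and yields $(S)$. The delicate part is carrying out that comparison without circularly invoking $(S)$ itself, so I would first isolate a minimal auxiliary identity about $0$-legged grey spiders and only then read off the scalar rule.
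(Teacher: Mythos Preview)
Your plan is essentially the paper's own approach, though you present as an ``obstacle'' what is in fact the direct route. The paper simply observes that \StrongCompRule and \HCompRule with $n=m=0$ \emph{are} two of the equalities in \ScalarRule: their right-hand sides are empty diagrams, and since \StrongCompRule is stated with grey spiders, unfolding \GreyDef on the left-hand side exposes the stars automatically. The remaining equalities are then short diagrammatic computations from these together with \SpiderRule, \HFuseRule, \IDRule, \HHRule.

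One small point of framing: you speak of obtaining ``two descriptions'' of the $0$-legged grey spider and equating them. There is only one description, namely \GreyDef; what \StrongCompRule with $n=m=0$ gives you is not a second evaluation of the grey spider but an axiom whose left-hand side contains a grey spider and whose right-hand side is empty. Unfolding \GreyDef in that axiom is exactly the comparison you want, so there is no circularity risk and no auxiliary lemma needed --- the scalar identity drops out immediately.
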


\begin{lemma}\label{lem:zx-inspired} The following ZX-like rules hold:
\begin{gather*}
  \qquad \tikzfig{X-spider-rule} \qquad  \qquad \tikzfig{XX-ZZ-cancel} \quad  \qquad \tikzfig{H-Z-commute} \qquad\\
\tikzfig{Z-commute}  \qquad  \qquad \tikzfig{NOT-commute} \qquad \qquad \tikzfig{H-NOT-commute}  \qquad \qquad \tikzfig{H-X-commute}
\end{gather*}
\end{lemma}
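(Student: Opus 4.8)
The plan is to derive every identity in the lemma by unfolding the definitions of the derived generators — the grey (X) spiders \GreyDef, the two NOT gates \NotDef\ and \ZDef, and the arity-$2$ H-box (Hadamard) — and then reducing the resulting purely phase-free diagrams with the base rules of Figure~\ref{fig:phasefree-rules}, using Lemma~\ref{lem:scalarcancel} to absorb the bookkeeping scalars that show up. Since the later identities are most cleanly proved from the earlier ones, I would establish them roughly in the order they are listed.

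First I would prove grey spider fusion (XS1): unfold both grey spiders via \GreyDef, so that each wire joining them carries two arity-$2$ H-boxes; cancel each such pair with \HHRule\ (producing a scalar $2$, i.e.\ a zero-legged Z-spider, for every contracted wire); fuse the two resulting Z-spiders with \SpiderRule; and finally match the leftover star generators and scalar-$2$ Z-spiders against the stars that \GreyDef\ requires for the merged grey spider, killing the surplus with \ScalarRule. The grey identity rule (XS2) then drops out as the arity-$(1,1)$ case, together with \IDRule, and the colour-change rule (CH) is essentially \GreyDef\ read backwards: a Z-spider conjugated by arity-$2$ H-boxes is a grey spider, again modulo \HHRule\ and scalar cleanup.

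For the NOT/Z identities I would first re-express the white NOT \ZDef\ and the grey NOT \NotDef\ as a Z-spider (resp.\ grey spider) with one extra leg plugged by a fixed state, so that composing two copies and fusing with \SpiderRule\ (resp.\ (XS1)) produces a single spider with two such plugged legs; merging those two H-boxes with \NotRule\ multiplies their labels to $1$, and a short sub-derivation identifies the resulting arity-$1$, label-$1$ H-box with a zero-legged Z-spider, collapsing the whole diagram to a bare wire — this yields the involutions (XX) and (ZZ). The commutation rules (a Hadamard sliding past a NOT, possibly flipping its colour) and the NOT-copy rule $(\neg\mathrm{CP})$ follow in the same style: unfold the NOT, push the plugged state through the Z-spider with the bialgebra rule \HCompRule\ (and \StrongCompRule\ where an X-spider is involved), and recombine; for the copy rule I would first record the Hadamard-copy behaviour as a small preliminary consequence of \HCompRule.

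The main obstacle is not any individual rewrite but the \emph{scalar bookkeeping}: the grey spiders and the NOT gates carry implicit factors (the $\tfrac1{\sqrt2}$ stars, and the $\tfrac12$ inside \NotDef), while \SpiderRule, \HHRule\ and \OrthoRule\ repeatedly emit zero-legged Z-spiders standing for the scalar $2$, so the two sides of each identity coincide on the nose only after all of these are paired up and the excess is discharged via Lemma~\ref{lem:scalarcancel}. The other delicate point is the auxiliary step that rewrites an arity-$1$, label-$1$ H-box — the output of \NotRule\ — back into phase-free form: because the phase-free calculus has dropped the original unit rule (U), this little identity must itself be reproved from \SpiderRule, \HHRule\ and \HCompRule\ before it can be invoked in the involution and copy arguments.
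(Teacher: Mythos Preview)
Your plan is essentially the paper's own: unfold the derived generators, apply \SpiderRule/\HHRule/\HFuseRule/\StrongCompRule/\HCompRule, and mop up the scalars with Lemma~\ref{lem:scalarcancel}, proving the identities in roughly the listed order so that (XS1) and (CH) feed the later ones. The paper likewise packages the ``short sub-derivation'' you flag as a separate auxiliary lemma (a direct rewriting of the negate generator) proved immediately before this one.

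One terminological slip worth fixing: in the phase-free fragment there is no ``arity-$1$, label-$1$ H-box'' for \NotRule\ to produce. The phase-free \NotRule\ already has a phase-free right-hand side, so the auxiliary step you describe is not ``identify a label-$1$ H-box with a state'' but rather ``show that the phase-free output of \NotRule\ reduces to a bare wire (resp.\ disconnected unit)''. This is exactly what the paper's auxiliary lemma does, and your remark that the original unit rule~(U) is unavailable is the right diagnosis of why a small argument is needed here.
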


\begin{lemma}\label{lem:statecopy} The following rules for copying states hold:
\ctikzfig{state-copy}
\end{lemma}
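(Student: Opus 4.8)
The plan is to derive each equation of Lemma~\ref{lem:statecopy} separately; in each case the work is done by one of the bialgebra rules \StrongCompRule\ or \HCompRule\ read in a suitable orientation, combined with spider fusion \SpiderRule, \HFuseRule\ and the $\zxcalculus$-like and scalar identities already established in Lemma~\ref{lem:zx-inspired} and Lemma~\ref{lem:scalarcancel}. In line with the authors' stated preference I will try to avoid the two new rules \ZeroRule\ and \AndRule\ throughout, which should be possible here since no zero scalars or AND-gates occur in any of these derivations.

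For the copy rule \StateCopy, observe first that by the definition \GreyDef\ the state $\ket0$ is just a grey spider carrying a single wire. Plugging it into a Z-spider with $n$ legs is therefore an instance of the left-hand side of \StrongCompRule\ in the degenerate case where the grey node has no legs besides the connecting one; applying \StrongCompRule\ collapses the Z-spider into $n$ disconnected one-legged grey nodes, \ie $n$ copies of $\ket0$, after which \ScalarRule\ absorbs any residual scalar. For the $\ket1$ version, write $\ket1$ as a grey node followed by a \NotDef\ on the connecting wire, use the NOT-copying rule \NotCopyRule\ of Lemma~\ref{lem:zx-inspired} to push that NOT through the Z-spider onto all $n$ output legs, and then finish with the $\ket0$ case just proved.

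I would then reduce the Hadamard-copy rule \HadCopyRule\ to \StateCopy. Unfolding the definition \GreyDef\ turns its left-hand side into a Z-spider, conjugated by H-boxes on every leg, fed by a one-legged H-box; the incoming H-box together with the conjugating Hadamard on that leg reduces, via \HFuseRule\ and \HHRule, to the state $\ket1$ (a grey node with a \NotDef), at which point the $\ket1$ case of \StateCopy\ copies it into $n$ copies; conjugating each copy back through the outgoing H-boxes returns $n$ one-legged H-boxes, and \ScalarRule\ settles the leftover scalar. Any colour-dual copy rules that the lemma also contains follow by the same argument, with \StrongCompRule\ read with its two colours swapped.

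The main difficulty I anticipate is not the combinatorics, which is routine, but the \emph{scalar bookkeeping}: in the phase-free fragment the grey, white and H generators all hide arity-dependent powers of $\tfrac{1}{\sqrt2}$ (carried by the star generator), so each use of a bialgebra rule and each conversion between an H-box and a NOT spawns star generators and zero-legged Z-spiders that must be cancelled exactly with \ScalarRule. A secondary point to watch is the edge behaviour of \StrongCompRule\ when the other spider has zero or one leg --- the cases $n=0$ and $n=1$, where the right-hand side of the bialgebra rule degenerates --- which are best checked directly using \IDRule\ and \SpiderRule\ rather than by appeal to the general rule.
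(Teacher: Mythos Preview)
Your overall plan---use \StrongCompRule\ or \HCompRule\ for the ``plain'' copy equations and then push a \NotDef\ through with \NotCopyRule\ for the $\neg$-decorated ones---is exactly what the paper does, and it will go through cleanly for all but one of the equations. The scalar bookkeeping you flag is a genuine nuisance but not a real obstacle; \ScalarRule\ suffices everywhere.

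The gap is in your treatment of \HadCopyRule. There are several HCP-type equations in the figure, and the rightmost one does \emph{not} succumb to the ``unfold the grey spider, fuse H-boxes, reduce to \StateCopy'' strategy you describe. That strategy handles the case where an H-state is copied through a grey spider, but the last equation involves a configuration for which mere bialgebra-plus-fusion is not enough: the paper's own proof invokes \OrthoRule\ here, and in the body text the authors explicitly note that this particular copy rule is the exception that seems to require \OrthoRule. Your proposal mentions neither \OrthoRule\ nor any alternative mechanism that would do its job, so as written your plan would stall on that final case. A related minor point: your remark that ``colour-dual copy rules \ldots\ follow by the same argument, with \StrongCompRule\ read with its two colours swapped'' overstates the available symmetry---the phase-free ZH-calculus is not colour-symmetric (the paper itself points out the asymmetry when discussing the needed instances of \HCompRule), so you cannot simply swap colours in \StrongCompRule; you must go via \ColourRule\ and the derived grey-spider rules explicitly.
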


\begin{lemma}\label{lem:commutation} The following commutation and simplifying rules hold:
\begin{equation*}
  \tikzfig{CZ-correct} \qquad \qquad \tikzfig{X-Z-commute} \qquad \qquad \tikzfig{hopf-rule} \qquad \qquad \tikzfig{had-Z-cancel}
\end{equation*}
\end{lemma}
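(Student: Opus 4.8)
The plan is to prove each of the four identities as a short diagrammatic derivation on top of the base rules of Figure~\ref{fig:phasefree-rules} and the derived rules of Lemmas~\ref{lem:scalarcancel}--\ref{lem:statecopy}. The uniform opening move is to \emph{unfold} every derived generator that occurs into its definition: the grey spider via \GreyDef, the grey and white \textsc{not}s via \NotDef and \ZDef, and, if it appears, the triangle via \TriangleDef, so that both sides of each equation become diagrams over Z-spiders, H-boxes and stars only. The recurring workhorses are then the colour-change rule \ColourRule, grey-spider fusion \XSpiderRule, the state-copy rules \StateCopy and \StateCopyH, the multiply rule \NotRule, and the two bialgebra rules \StrongCompRule and \HCompRule; each derivation is closed off by normalising the leftover scalar boxes (stars, the zero-legged H-box, which is $-1$, and the zero-legged Z-spider, which is $2$) with the scalar rule \ScalarRule and the zero rule \ZeroRule.

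Three of the equations are then routine. For the Hopf rule \HopfRule --- a Z-spider and a grey spider joined by two parallel wires --- I would push the H-boxes hidden inside the grey spider through the connecting wires with \ColourRule until the two nodes carry the same colour, at which point the two parallel wires become a loop that \SpiderRule and \XSpiderRule absorb, collapsing the picture onto the disconnected unit--counit form, with \ScalarRule fixing the scalar at the end. For the \textsc{xz}-commutation rule \XZCommute, after unfolding the grey \textsc{not} and the white \textsc{not} one fuses the resulting spiders where their colours agree and slides the arity-$1$ H-box decorations past each other using \NotRule, \NotCopyRule and the involution rules \ZInvolutionRule and \XInvolutionRule; the scalar $-1$ (a zero-legged H-box) is what is left behind by this reshuffling. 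The Hadamard--$Z$ rule \HadDoubleRule is the diagrammatic incarnation of $HZH=X$: writing the Hadamards as H-boxes of arity $2$ and the central node via \ZDef, one moves the H-boxes inward with \ColourRule and cancels the pair that meet with \HHRule, leaving the grey \textsc{not} up to a scalar that \ScalarRule clears.

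For the \textsc{cz} rule \CZRule I would expand the construction into two Z-spiders linked through an H-box, use the H-box bialgebra \HCompRule together with the copy rules \StateCopy and \StateCopyH to redistribute the H-box across the two Z-spiders, and then re-fuse the result with \SpiderRule and re-colour with \ColourRule; checking at the matrix level that both sides equal $\mathrm{diag}(1,1,1,-1)$ up to a global scalar pins down exactly which scalar boxes must survive. I expect the scalar bookkeeping to be the main obstacle throughout: in the phase-free fragment the factors $\tfrac1{\sqrt2}$, $2$ and $-1$ can only be manipulated as stars, zero-legged Z-spiders and zero-legged H-boxes, so every intermediate diagram must carry these along and they must be reconciled at the end via \ScalarRule and \ZeroRule. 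Among the four equations I expect \CZRule --- the first to use the H-box bialgebra \HCompRule in a genuinely non-trivial way --- to have the longest and most error-prone derivation.
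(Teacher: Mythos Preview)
Your sketches for \CZRule, \XZCommute and \HopfRule are broadly in line with the paper's own derivations: unfold the derived generators, use colour change and the bialgebra rules to reshape, and tidy the scalars with \ScalarRule. In the paper these three are indeed short; in particular \CZRule is not the bottleneck you anticipate.

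The substantive gap is your treatment of \HadDoubleRule. You have read the tag ``had-Z-cancel'' as the single-qubit identity $HZH=X$ and propose to prove it by a one-line colour change plus \HHRule. That identity is already a triviality of Lemma~\ref{lem:zx-inspired} (it is essentially \ColourRule) and would never merit the paper's explicit warning that \HadDoubleRule ``we only know how to prove \ldots\ by using \AndRule or by using a lengthy proof inspired by the normal-form''. The rule \HadDoubleRule is in fact a \emph{deduplication} statement: two parallel wires between an H-box and a Z-spider may be collapsed to one. The paper's short proof routes through \AndRule, rewriting the H-box into the triangle-based \textsc{and} so that the doubled connection becomes a Z self-loop that \SpiderRule absorbs; the alternative, \AndRule-free proof occupies an entire appendix and builds up a chain of splitting/ortho lemmas to reach a normal-form-style argument. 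Your proposed colour-change-and-cancel argument does not touch this phenomenon at all, and no amount of scalar bookkeeping will close that gap. You should either invoke \AndRule explicitly (and accept the dependency) or reproduce the normal-form machinery; as written, your plan for \HadDoubleRule does not prove the stated rule.
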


All the derived rules presented so far follow relatively easy from the rules. The exceptions are \HadDoubleRule and the rightmost \StateCopyH of Lemma~\ref{lem:statecopy}. For \StateCopyH our proof requires \OrthoRule, while for \HadDoubleRule we only know how to prove it by using \AndRule or by using a lengthy proof inspired by the normal-form of Ref.~\cite{backens2018zhcalculus} (see Appendix~\ref{sec:proof-of-dedup}).

\begin{lemma}\label{lem:triangle-rules} The following rules involving the triangle hold:
\begin{equation*}
    \tikzfig{triangle-def2} \qquad \qquad \qquad \tikzfig{triangle-inverse}
\end{equation*}
\end{lemma}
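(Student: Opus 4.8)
The plan is to reduce both statements to the language of ordinary ZH-generators by unfolding the derived triangle from its definition \TriangleDef (and, where it occurs, its transpose), and then to rewrite each side to a common form using only the rules of Figure~\ref{fig:phasefree-rules} together with the already-established derived rules of Lemmas~\ref{lem:scalarcancel}--\ref{lem:commutation}. Once the triangle is replaced by its defining gadget --- a Z-spider, an H-box and a pair of grey (X) spiders --- each equation becomes a statement purely about spiders and H-boxes, which is the setting in which the axioms live.

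For the first equation, re-expressing the triangle in the alternative form, I would begin by checking semantically that the two diagrams denote the same matrix $\left(\begin{smallmatrix}1&1\\0&1\end{smallmatrix}\right)$; this is routine and guarantees a rewrite path exists. To produce one, push the H-box through the neighbouring Z-spider with the ZH-bialgebra rule \HCompRule, fuse the resulting spiders and H-boxes using \SpiderRule and \HFuseRule, and slide the NOT/negate generators past spiders and H-boxes with the colour-change rule \ColourRule and the $\neg$-commutation rules of Lemma~\ref{lem:zx-inspired}. The scalar bookkeeping --- the factors of $\tfrac1{\sqrt2}$ hidden in \GreyDef and \NotDef and the star generator --- is absorbed using Lemma~\ref{lem:scalarcancel}, together with \SpiderRule whenever a $2$-scalar (a zero-legged Z-spider) must be merged. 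I expect this direction to be essentially mechanical.

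The second equation, the inverse rule, is the delicate one. Here one composes the triangle with the diagram meant to be its inverse (the triangle carrying negate/NOT decorations on its legs, realising $\left(\begin{smallmatrix}1&-1\\0&1\end{smallmatrix}\right)$), expands both triangles, and must end up with a bare wire and a trivial scalar. The two expansions meet along a pair of edges that have to be eliminated, and this is exactly where the Hopf rule \HopfRule --- or the bialgebra rule \StrongCompRule followed by \SpiderRule --- does the work, collapsing the antiparallel Z--X connection into disconnected (co)units which are then cleared by \IDRule and Lemma~\ref{lem:scalarcancel}; the $-1$'s coming from the negates are commuted along until they meet an H-box or a second NOT and annihilate via \ZInvolutionRule, \XInvolutionRule and the multiply rule \NotRule. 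The main obstacle I anticipate is precisely this cancellation: checking that \emph{every} scalar (including a possible zero-scalar, which is why \ZeroRule is in the rule-set) reduces to $1$ and that no stray edge survives. Should a direct argument become too unwieldy --- as it did for \HadDoubleRule --- the fallback is to invoke the AND rule \AndRule to first rewrite the triangle into its $\mathrm{AND}$-based form, where $\Delta\Delta^{-1}=\mathrm{id}$ is a short computation, though we would rather avoid \AndRule in line with the minimality discussion of Section~\ref{sec:minimality}.
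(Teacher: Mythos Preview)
Your overall strategy---unfold the triangle via \TriangleDef and then rewrite with the spider/H-box/bialgebra rules and the derived Lemmas~\ref{lem:scalarcancel}--\ref{lem:commutation}---is exactly what the paper does, and for the first equation your sketch is essentially the paper's proof. The main structural difference is that the paper first isolates a small auxiliary identity (Lemma~\ref{lem:triangle-Z}, a relation between the triangle and the Z/negate generator) and then uses it to keep the \TriangleInverse derivation short and split into two steps; you instead propose to attack \TriangleInverse in one go by composing $\Delta$ with its putative inverse and collapsing the result. Both routes go through, but the paper's intermediate lemma localises the one nontrivial interaction and makes the scalar bookkeeping trivial, whereas your direct composition will produce a larger diagram whose simplification you only describe at the level of ``Hopf then clean up''---workable, but you should expect to discover the content of Lemma~\ref{lem:triangle-Z} along the way.

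Two small corrections to your plan. First, there is no reason a zero scalar can arise here: both $\Delta$ and its inverse are invertible matrices, so \ZeroRule plays no role in this lemma and you should not expect to invoke it. Second, your fallback to \AndRule is unnecessary and would in fact undercut the purpose of the lemma: the paper explicitly proves \TriangleInverse \emph{without} \AndRule precisely so that Lemma~\ref{lem:and-rule-prime} can use \TriangleInverse to reformulate \AndRule. So commit to the \AndRule-free route; it goes through with only \HCompRule, \HFuseRule, the colour/commutation rules of Lemma~\ref{lem:zx-inspired}, and scalar cancellation.
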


As we have not needed \AndRule to prove all the results up to now, we can use \TriangleInverse to state an equivalent form of \AndRule:

\begin{lemma}\label{lem:and-rule-prime}
    In the presence of all the other rules, \AndRule is equivalent to:
    \begin{equation}\label{eq:and-rule-prime}
     \tikzfig{and-rule-prime}
    \end{equation}
\end{lemma}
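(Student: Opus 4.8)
The statement asks for interderivability, so the plan is to prove the two implications separately. In each direction I will use only the rules of Figure~\ref{fig:phasefree-rules} other than \AndRule, together with the derived rules of Lemmas~\ref{lem:scalarcancel}--\ref{lem:triangle-rules} — all of which, with the sole exceptions of \HadDoubleRule and the rightmost \StateCopyH, are proved without \AndRule, and which in particular includes \TriangleInverse. Avoiding \HadDoubleRule is the one genuine constraint to keep in mind throughout.

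The workhorse is \TriangleInverse, which exhibits an explicit two-sided inverse of the triangle (built from $Z$-spiders). Hence composing a diagram with a triangle along a chosen wire is a reversible operation: two diagrams agree after a triangle is glued onto one of their wires if and only if they agreed beforehand. The equivalence will come out of applying this observation to the output wire of the \AndRule equation.

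For the direction \AndRule $\Rightarrow$ \eqref{eq:and-rule-prime}, I would take the \AndRule equation and glue a triangle (or, depending on the orientation of \eqref{eq:and-rule-prime}, its $\Delta^{-1}$) onto the output wire of both sides. On the triangle-construction side the resulting composite of triangles collapses by \TriangleInverse and the commutation rules of Lemma~\ref{lem:commutation}, leaving at most a scalar which is cleared with \ScalarRule; on the H-box-construction side the new triangle is absorbed into the adjacent Z-spider using the first identity of Lemma~\ref{lem:triangle-rules} and the copy rules of Lemma~\ref{lem:statecopy}. By construction each side is then exactly the corresponding side of \eqref{eq:and-rule-prime}. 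For \eqref{eq:and-rule-prime} $\Rightarrow$ \AndRule I would run the very same chain of equalities backwards, re-attaching the triangle that was cancelled; since every step is an equality derivable without \AndRule, this yields \AndRule.

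The main obstacle I expect is bookkeeping rather than ingenuity: first, tracking the $\tfrac{1}{\sqrt2}$-scalars hidden in the triangle, grey-spider and NOT generators so that \ScalarRule (and \ZeroRule, should a zero scalar appear) really does clear everything and no spurious factor survives; and second, confirming that the H-box side of \AndRule can be simplified entirely within the \AndRule-free fragment, i.e.\ routing all triangle manipulations through Lemma~\ref{lem:triangle-rules} and the copy rules rather than through \HadDoubleRule. With those two points discharged, each direction is a short diagrammatic computation.
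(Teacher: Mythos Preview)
Your proposal is correct and is exactly the argument the paper has in mind: the sentence preceding the lemma already says that since \TriangleInverse was proved without \AndRule, one can use the invertibility of the triangle to pass between \AndRule and \eqref{eq:and-rule-prime}. Your elaboration on scalar bookkeeping and the need to steer clear of \HadDoubleRule is sound, though in practice neither side requires more than composing with the triangle (or its inverse) and invoking \TriangleInverse plus \ScalarRule.
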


\section{Completeness}\label{sec:completeness}

We write $\zh \vdash D_1 = D_2$ when $D_1$ and $D_2$ can be proven to be equal using the rules of the phase-free ZH-calculus. The ZH-calculus is \emph{complete} when $\intf{D_1}=\intf{D_2}$ implies that ${\zh \vdash D_1=D_2}$.

To prove completeness we use the result that there is a rule-set for the $\pi$-fragment of the ZX-calculus with an additional triangle node that is complete~\cite{vilmart2018zxtriangle}. We will refer to this calculus as \dzx.
We will give interpretations going back and forth between phase-free ZH-diagrams and \dzx and show that the rules of the ZH-calculus suffice to prove all the rules of \dzx under this interpretation.

\subsection{The \texorpdfstring{\dzx}{DZX}-calculus}

The \dzx-calculus has four generators: Z- and X-spiders, Hadamard gates and triangles. The Hadamard-gate is depicted as a yellow box and the triangle as a red triangle:
\begin{equation*}
\intf{~\tikzfig{Hadamard}~}:= \frac{1}{\sqrt{2}}\begin{pmatrix}1 & 1\\1 & -1\end{pmatrix}\qquad\qquad
\intf{\tikzfig{triangle-red}}:=\begin{pmatrix}1&1\\0&1\end{pmatrix}
\end{equation*}
The Z- and X-spiders are depicted respectively as green and red dots with an internal phase:
\begin{equation*}
\intf{\tikzfig{gn-alpha}} = \ket{0}^{\otimes n}\bra{0}^{\otimes m} + e^{i\alpha}\ket{1}^{\otimes n}\bra{1}^{\otimes m} \qquad \qquad \intf{\tikzfig{rn-alpha}}:=\intf{~\tikzfig{Hadamard}~}^{\otimes m}\circ \intf{\tikzfig{gn-alpha}}\circ \intf{~\tikzfig{Hadamard}~}^{\otimes n}
\end{equation*}
For our purposes the value of $\alpha$ is either $0$, in which case we do not write it, or $\pi$.

The \dzx-calculus has the same meta-rules as the ZH-calculus: only connectivity matters, and the generators are symmetric with respect to swaps and transposes, with the exception of the triangle, which when transposed gets flipped upside down in the same way as the triangle in the ZH-calculus. The other rules of the \dzx-calculus are presented in Figure~\ref{fig:triangle-rules}.

\begin{figure*}[!bt]
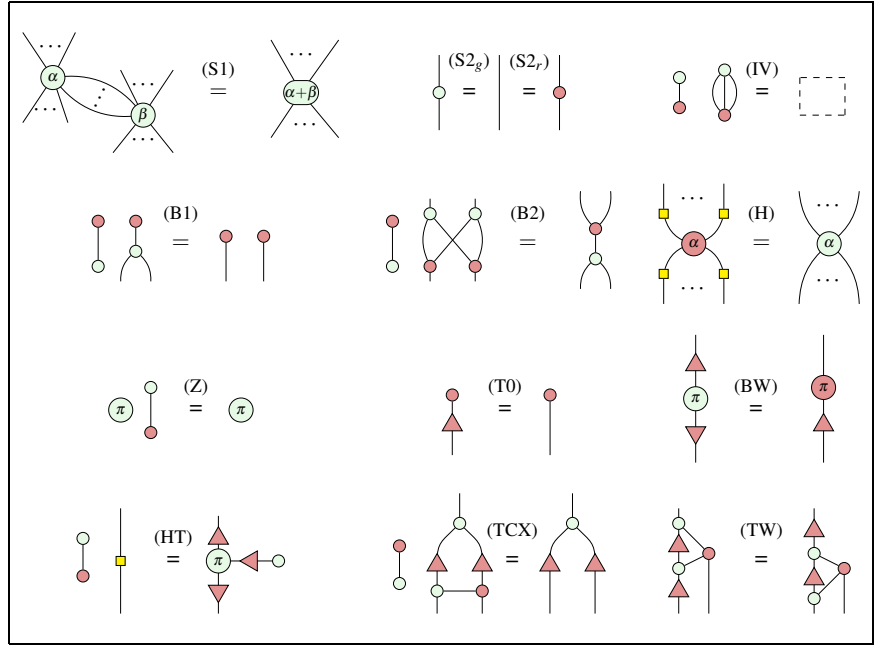

 \centering
 \scalebox{0.8}{%
 \begin{tabular}{|c@{$\qquad$}c@{$\qquad$}c|}
 \hline
   && \\
   \tikzfig{spider-1}&\tikzfig{s2-green-red}&\tikzfig{inverse}\\
   && \\
   \tikzfig{b1s}& \tikzfig{b2s}&\tikzfig{h2}\\
   && \\
   \tikzfig{zero-rule}&\tikzfig{ket-0-on-triangle}&\tikzfig{BW}\\
   && \\
   \tikzfig{triangle-hadamard}&\tikzfig{cnot-on-triangle-fork}&\tikzfig{triangle-through-W}\\
   && \\
   \hline
  \end{tabular}}
 \caption{The complete set of rules of Ref.~\cite{vilmart2018zxtriangle} for the ZX-Calculus with triangles. The right-hand side of (IV) is an empty diagram. (...) denote zero or more wires, while (\protect\rotatebox{45}{\raisebox{-0.4em}{$\cdots$}}) denote one or more wires. $\alpha,\beta\in\{0,\pi\}$.
 }
 \label{fig:triangle-rules}
\end{figure*}

\subsection{From \texorpdfstring{\dzx}{DZX} to ZH}

In this section we give an interpretation from \dzx-diagrams to ZH-diagrams that preserves the linear map semantics. We will denote this interpretation by $\intf{\cdot}_{\text{ZH}}$. To do this, we must give an interpretation of every generator of the \dzx-calculus as a ZH-diagram. This is very straightforward, only requiring a normalisation for the Hadamard gate:
\begin{equation*}\def\arraystretch{3.0}
\begin{array}{lcrclcrclcr}
\intf{\;|\;}_\zh & \quad=\quad & | &\qquad &\intf{\tikzfig{wire-cup}}_\zh & \quad=\quad & \tikzfig{wire-cup} & \qquad & \intf{\tikzfig{wire-cap}}_\zh  & \quad=\quad & \tikzfig{wire-cap} \\ \vspace{0.4cm}
\intf{\tikzfig{swap}}_\zh &\quad=\quad& \tikzfig{swap} & \qquad & \intf{\tikzfig{Hadamard}}_\zh &\quad=\quad& \tikzfig{had-normalised} & \qquad & \intf{\tikzfig{triangle-red}}_\zh & \quad=\quad & \tikzfig{triangle} \\ \vspace{0.4cm}
\intf{\tikzfig{gn}}_\zh & \quad = \quad & \tikzfig{Z-spider-no-label} & \qquad & \intf{\tikzfig{gn-pi}}_\zh & \quad = \quad & \tikzfig{Z-spider-negate}
\end{array}
\end{equation*}

\noindent The interpretation of the X-spider follows from the interpretation of the Z-spider and the Hadamard gate. It is straightforward to check that this interpretation preserves the semantics:
\begin{proposition}\label{prop:preservesemantics}
Let $D$ be a \dzx-diagram. Then $\intf{D} = \intf{\intf{D}_\zh}$.
\end{proposition}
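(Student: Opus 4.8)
The plan is to verify the semantic equation $\intf{D} = \intf{\intf{D}_\zh}$ by a structural induction on the \dzx-diagram $D$. Since the interpretation $\intf{\cdot}_\zh$ is defined generator-by-generator (and on wires, cups, caps, and swaps), and since both $\intf{\cdot}$ and $\intf{\cdot}_\zh$ are defined compositionally with respect to sequential and parallel composition, the inductive step is automatic: if $D = D_1 \circ D_2$ then $\intf{\intf{D}_\zh} = \intf{\intf{D_1}_\zh \circ \intf{D_2}_\zh} = \intf{\intf{D_1}_\zh}\circ\intf{\intf{D_2}_\zh}$, which equals $\intf{D_1}\circ\intf{D_2} = \intf{D}$ by the induction hypothesis, and similarly for tensor products. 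So the entire content of the proof is the collection of base cases, one for each generator.

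Concretely, I would check the following identities of linear maps. For the structural pieces (identity wire, cup, cap, swap) the interpretation is literally the identity map on diagrams, so there is nothing to prove. For the Z-spider with phase $0$, we have $\intf{\tikzfig{gn}}_\zh$ is the Z-spider of the ZH-calculus, whose semantics $\ket{0}^{\otimes n}\bra{0}^{\otimes m} + \ket{1}^{\otimes n}\bra{1}^{\otimes m}$ matches the \dzx\ Z-spider at $\alpha = 0$ on the nose. For the Z-spider with phase $\pi$, one computes that post- and pre-composing the ZH Z-spider with white NOT generators (as in the definition \ZDef) yields $\ket{0}^{\otimes n}\bra{0}^{\otimes m} - \ket{1}^{\otimes n}\bra{1}^{\otimes m}$, matching $e^{i\pi} = -1$. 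For the triangle, the claim is exactly the stated fact that the ZH-calculus triangle \TriangleDef\ has matrix $\begin{pmatrix}1&1\\0&1\end{pmatrix}$, which is the interpretation of the red triangle; the upside-down/transpose convention matches on both sides by construction. The only generator needing a genuine (if tiny) computation is the Hadamard gate: the \dzx\ Hadamard is $\tfrac{1}{\sqrt2}\begin{pmatrix}1&1\\1&-1\end{pmatrix}$, whereas a bare ZH H-box on one input and one output has matrix $\begin{pmatrix}1&1\\1&-1\end{pmatrix}$; hence the ZH-interpretation $\intf{\tikzfig{Hadamard}}_\zh$ is defined as the H-box decorated by a single $\tfrac{1}{\sqrt2}$ scalar (a $\tikzfig{star}$), and one checks the scalar prefactors agree. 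Finally, the X-spider at phases $0$ and $\pi$ needs no separate argument: its \dzx\ semantics is \emph{defined} by conjugating the Z-spider with Hadamards, and its ZH-interpretation is defined the same way, so the equation for X-spiders follows formally from the equations for the Z-spider and the Hadamard together with functoriality.

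I do not expect any real obstacle here; the proposition is essentially a bookkeeping check that the normalisation constants were chosen correctly. If there is a subtle point, it is making sure the scalar factors balance globally: the ZH-calculus carries explicit nonzero scalars (powers of $\tfrac{1}{\sqrt2}$ and $2$), so when translating a \dzx-diagram with $k$ Hadamard boxes one must confirm the accumulated $(\tfrac{1}{\sqrt2})^k$ from the translation exactly reproduces the $(\tfrac{1}{\sqrt2})^k$ hidden in the \dzx\ Hadamards, with no stray factor of $2$ left over from cups/caps or spider fusion. Checking this on each generator in isolation, as above, suffices because the compositional structure then propagates it. Thus the write-up is: state the induction on diagram structure, dispatch the composition and tensor cases by functoriality of both interpretations, and then list the per-generator verifications, spelling out only the Hadamard scalar and the phase-$\pi$ spider sign explicitly and leaving the rest as immediate.
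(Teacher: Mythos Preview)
Your proposal is correct and matches the paper's approach: the paper simply states that it is ``straightforward to check that this interpretation preserves the semantics'' and gives no further proof, so your generator-by-generator verification with the compositional induction step is exactly the intended (and only reasonable) argument, just spelled out in full. One cosmetic remark: the ZH image of the $\pi$-phase Z-spider is a single Z-spider with one extra leg capped by a unary H-box (equivalently, a single white \texttt{NOT}), not a pre- \emph{and} post-composition with NOTs on every wire; the semantics still come out to $\ket{0}^{\otimes n}\bra{0}^{\otimes m} - \ket{1}^{\otimes n}\bra{1}^{\otimes m}$ either way, so this does not affect your argument.
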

 
\noindent As a result we get the following proposition:

\begin{proposition}\label{prop:universality}
  The phase-free ZH-calculus is \emph{universal} for the matrices over $\mathbb{Z}[\frac{1}{\sqrt{2}}]$ of shape $2^n\times 2^m$. I.e.~let $M$ be any $2^n\times 2^m$ matrix for $n,m \in \N$, with entries of the form $\frac{a}{\sqrt{2}^b}$ where $a\in\mathbb{Z}$ and $b\in\N$. Then there exists a phase-free ZH-diagram $D$ such that $\intf{D} = M$.
\end{proposition}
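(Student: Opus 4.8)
The plan is to transfer universality across the translation $\intf{\cdot}_\zh$ of the previous subsection. The key input is that the \dzx-calculus is already universal for this matrix class: every matrix of shape $2^n\times 2^m$ all of whose entries are of the form $a/\sqrt{2}^{b}$ for a common $b$ (equivalently, $\sqrt{2}^{b}M$ an integer matrix) is $\intf{D}$ for some \dzx-diagram $D$. This is standard for the $\pi$-fragment of the ZX-calculus extended with the triangle and is part of the setup of Ref.~\cite{vilmart2018zxtriangle}. Granting it, the proposition is immediate: given such an $M$, pick a \dzx-diagram $D$ with $\intf{D}=M$; then $\intf{D}_\zh$ is a phase-free ZH-diagram by the definition of the translation, and Proposition~\ref{prop:preservesemantics} gives $\intf{\intf{D}_\zh}=\intf{D}=M$, so $D':=\intf{D}_\zh$ does the job.

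If one prefers a self-contained argument not quoting \dzx-universality, I would build a phase-free ZH-diagram for $M$ by an explicit normal form. Write $M=\tfrac{1}{\sqrt{2}^{b}}M'$ with $M'$ an integer matrix (possible precisely under the hypothesis on the entries), realize the scalar $\tfrac{1}{\sqrt{2}^{b}}$ by $b$ copies of the star generator, and decompose $M'=\sum_{\vec i,\vec j}M'_{\vec j\vec i}\,\ketbra{\vec j}{\vec i}$. Each matrix unit $\ketbra{\vec j}{\vec i}$ is a composite of computational basis states and effects, realized by Z- and X-spiders with phases in $\{0,\pi\}$; an integer weight $c$ is realized by a triangle gadget, since the triangle is the map $\bigl(\begin{smallmatrix}1&1\\0&1\end{smallmatrix}\bigr)$ and hence $\bra{0}\,\Delta^{c}\,\ket{1}=c$ (using \TriangleInverse for $c<0$, the empty diagram for $c=0$); and the weighted matrix units are assembled into $M'$ by a phase-free instance of the normal-form construction of Ref.~\cite{backens2018zhcalculus}, which only involves $-1$-labelled H-boxes, stars, and $\{0,\pi\}$-phases.

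The self-contained route's only real work is the bookkeeping: confirming that pulling out $\tfrac{1}{\sqrt{2}^{b}}$ is legitimate for exactly the stated matrices (every matrix realized by the phase-free calculus has all entries sharing a single power of $\sqrt{2}$ in the denominator, so a common $b$ must exist), and that the multiplier, negation, and assembly gadgets never escape the phase-free generators. The \dzx-route sidesteps all of this: since $\intf{\cdot}_\zh$ is defined generator-by-generator and Proposition~\ref{prop:preservesemantics} already certifies that it preserves the linear-map semantics, universality of the \dzx-calculus for this matrix class transfers verbatim to the phase-free ZH-calculus, and that transfer is the whole proof.
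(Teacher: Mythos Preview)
Your main argument is correct and is exactly the paper's proof: cite universality of \dzx\ from Ref.~\cite{vilmart2018zxtriangle} and transfer it along $\intf{\cdot}_\zh$ via Proposition~\ref{prop:preservesemantics}. The self-contained normal-form sketch you add is extra (and has a small slip: for $c=0$ you need the zero scalar, not the empty diagram), but the core proof matches the paper verbatim.
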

\begin{proof}
  As shown in Ref.~\cite{vilmart2018zxtriangle}, the \dzx-calculus is universal for this set of matrices, hence by proposition~\ref{prop:preservesemantics}, the same holds true for ZH-diagrams.
\end{proof}

We write $\dzx \vdash D_1=D_2$ when $D_1$ and $D_2$ are \dzx-diagrams that can be proven to be equal using the \dzx-calculus. The following proposition lies at the heart of our completeness result. It essentially shows that the ZH-calculus can prove all the equations that the \dzx-calculus can.

\begin{proposition}\label{prop:dzx-to-zh}
  Let $D_1$ and $D_2$ be $\dzx$-diagrams such that $\dzx \vdash D_1=D_2$. Then  $\zh \vdash \intf{D_1}_\zh = \intf{D_2}_\zh$.
\end{proposition}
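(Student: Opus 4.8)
The approach is to check the translation one rule at a time, exploiting the fact that $\intf{\cdot}_\zh$ is defined on the generators of the $\dzx$-calculus and then extended so as to commute with sequential and parallel composition and to send the structural wires (identity, swap, cup, cap) to their ZH counterparts. Hence any $\dzx$-derivation of $D_1=D_2$ decomposes into a finite chain $D_1=E_0,E_1,\ldots,E_k=D_2$ in which each $E_{i+1}$ arises from $E_i$ by replacing a subdiagram matching the left-hand side $L_i$ of some rule (or meta-rule) of Figure~\ref{fig:triangle-rules} by the corresponding right-hand side $R_i$. Writing $E_i=C_i[L_i]$ and $E_{i+1}=C_i[R_i]$, compositionality gives $\intf{E_i}_\zh=\intf{C_i}_\zh[\,\intf{L_i}_\zh\,]$ and likewise for $E_{i+1}$, so it is enough to prove once and for all that $\zh\vdash\intf{L}_\zh=\intf{R}_\zh$ for every rule $L=R$ of the $\dzx$-calculus, and that $\intf{\cdot}_\zh$ respects the $\dzx$ meta-rules. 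The latter is immediate: ``only connectivity matters'' and spider symmetry hold in the ZH-calculus too, and the transpose convention for the $\dzx$-triangle was chosen precisely to match that of the ZH-triangle fixed just after \eqref{eq:Z-triangle-dfn}.

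First I would dispatch the ``easy'' rules. The green and red spider-fusion rules follow from \SpiderRule and the derived $X$-spider fusion \XSpiderRule of Lemma~\ref{lem:zx-inspired}, once one notes that the two Hadamards in the interpretation of an X-spider cancel and that the $\tfrac1{\sqrt2}$ normalisation scalars are absorbed using \ScalarRule (and, whenever a scalar $0$ or $2$ appears, the \ZeroRule rule and a zero-legged Z-spider). The Hadamard self-inverse rule reduces similarly to \XInvolutionRule/\ZInvolutionRule plus scalar cancellation. The two $\dzx$ bialgebra rules translate, after inserting and then cancelling Hadamards, into the ZH strong-complementarity rule \StrongCompRule of Figure~\ref{fig:phasefree-rules} up to scalars, while the rules governing how the basis states $\ket0,\ket1$ copy through and annihilate spiders become the state-copy rules of Lemma~\ref{lem:statecopy} together with \ZeroRule. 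The rule describing how a triangle commutes past a Hadamard, and the rule for $\ket0$ on a triangle, unfold through the definition \eqref{eq:Z-triangle-dfn} of the triangle and the triangle identities of Lemma~\ref{lem:triangle-rules}, in particular \TriangleInverse.

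The hard part will be the handful of $\dzx$-rules in the lower rows of Figure~\ref{fig:triangle-rules} that carry the genuinely non-stabiliser, AND-like content of the triangle --- the rules in which a triangle sits on a CNOT-fork or is pushed through the copy/XOR structure. Their ZH-images, after rewriting each triangle via \eqref{eq:Z-triangle-dfn}, are exactly statements equating the two presentations of the AND gate (via H-boxes and via triangles), which is what the new rule \AndRule (equivalently Lemma~\ref{lem:and-rule-prime}) supplies; so for each such rule the proof becomes a bounded diagram chase that invokes \AndRule once and then cleans up with Lemma~\ref{lem:triangle-rules}, the copy rules of Lemma~\ref{lem:statecopy}, and \ScalarRule. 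This is the step that forces \AndRule into the rule-set and is where essentially all the difficulty lies; the remaining cases are routine given the derived ZX-like rules already established. A pervasive but minor nuisance across every case is keeping track of the $\tfrac1{\sqrt2}$ factors produced by the Hadamard interpretation, which are handled uniformly with \ScalarRule together with a zero-legged Z-spider and the star generator.
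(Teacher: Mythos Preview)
Your high-level strategy --- verify each $\dzx$ rule under $\intf{\cdot}_\zh$, appealing to compositionality so that a single substitution suffices --- is exactly what the paper does in Appendix~\ref{sec:proof-of-dzx-to-zh}, and your treatment of the meta-rules and of the ``easy'' rules (spider fusion via \SpiderRule/\XSpiderRule, (B1)/(B2) via \StrongCompRule with scalar bookkeeping, (H) by definition, (Z) via \ZeroRule, (T0) by unfolding the triangle) matches the paper closely.

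Where you diverge is in your assessment of the ``hard'' triangle rules. You assert that each of the lower-row rules (HT), (TCX), (TW), (BW) ``invokes \AndRule once'' and that this is ``the step that forces \AndRule into the rule-set''. The paper is careful to show the opposite: \emph{only} (BW) requires \AndRule. Specifically, (TCX) is dispatched purely with repeated applications of \HCompRule; (TW) is handled via an auxiliary Lemma~\ref{lem:TW-lemma} that uses no \AndRule; and (HT) is deliberately proved without \AndRule even though the paper remarks it could be done more quickly with it. Your proof would still be \emph{correct} --- nothing stops you from using \AndRule gratuitously --- but your claim about where the necessity lies is wrong, and this distinction is one the paper cares about (cf.\ Section~\ref{sec:conclusion}, where the minimality of \AndRule is left open precisely because it is only used for (BW) here and for the H-box round-trip in Proposition~\ref{prop:zh-dzx-loop}). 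A minor additional point: for (S1) with multiple connecting wires you need to handle the resulting self-loops on the fused spider, which the paper does explicitly and you do not mention.
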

\begin{proof}
  The proof is given in Appendix~\ref{sec:proof-of-dzx-to-zh}.
\end{proof}

\subsection{From ZH to \texorpdfstring{\dzx}{DZX}}

We now give an interpretation of phase-free ZH-diagrams into \dzx-diagrams. This is again straightforward, with the only complication coming with the interpretation of the H-boxes. Since $n$-arity H-boxes for $n>3$ can be seen as derived generators by axiom \HFuseRule, we only give an interpretation of the H-box for arity $n=1,2,3$. Note that even though \HFuseRule does not give a unique way to decompose a higher arity H-box into a lower arity one, because of the completeness of the \dzx-calculus, any translation of a decomposition into the \dzx-calculus would lead to a provably equal \dzx-diagram, and hence we can indeed define our interpretation inductively like this. We denote the interpretation by $\intf{\cdot}_\dzx$. It is given by:

\begin{equation*}\def\arraystretch{3.0}
\begin{array}{lcrclcrclcl}
\intf{\;|\;}_\dzx & \ \ =\ \  & | &\qquad &\intf{\tikzfig{wire-cup}}_\dzx & \ \ =\ \  & \tikzfig{wire-cup} & \qquad & \intf{\tikzfig{wire-cap}}_\dzx  & \ \ =\ \  & \tikzfig{wire-cap} \\ \vspace{0.4cm}

\intf{\tikzfig{swap}}_\dzx &\ \ =\ \ & \tikzfig{swap} & \qquad & \intf{\tikzfig{Z-spider-no-label}}_\dzx &\ \ =\ \ & \tikzfig{gn}  & \qquad & \intf{\tikzfig{star}}_\dzx & \ \ =\ \ &  \tikzfig{1-sqrt2}\\ \vspace{0.4cm}

\intf{\hadastate{}}_\dzx &\ \ =\ \ & \tikzfig{pi-state} & \qquad & \intf{\tikzfig{had}}_\dzx & \ \ =\ \  & \tikzfig{Hadamard-normalised} & \qquad &
\intf{\hadamult}_\dzx & \ \  = \ \  & \scalebox{0.7}{\tikzfig{AND-triangle}} 
\end{array}
\end{equation*}

\noindent It is again straightforward to check that this interpretation preserves semantics:
\begin{proposition}\label{prop:semantics2}
  Let $D$ be a ZH-diagram. Then $\intf{D} = \intf{\intf{D}_\dzx}$.
\end{proposition}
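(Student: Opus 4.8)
The plan is to reduce the statement, exactly as for Proposition~\ref{prop:preservesemantics}, to a finite check on generators. Both $\intf{\cdot}$ and the composite $D\mapsto\intf{\intf{D}_\dzx}$ are defined compositionally: they send identities to identities, sequential composition to matrix product, juxtaposition to tensor product, and the structural wires (swap, cup, cap) to their standard interpretations. Hence it suffices to verify $\intf{G}=\intf{\intf{G}_\dzx}$ for each generator $G$ of the phase-free ZH-calculus, namely the Z-spiders of all arities, the H-boxes of all arities, and the star. The structural wires are sent to themselves by $\intf{\cdot}_\dzx$ and so need no separate argument.

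The Z-spider and the star are immediate. A Z-spider with $m$ inputs and $n$ outputs is sent to the \dzx Z-spider of the same arity, whose semantics is $\ket{0}^{\otimes n}\bra{0}^{\otimes m}+\ket{1}^{\otimes n}\bra{1}^{\otimes m}$, which is exactly $\intf{\cdot}$ of the ZH Z-spider. The star is sent to a scalar \dzx-diagram with value $\tfrac1{\sqrt2}=\intf{\tikzfig{star}}$. This leaves the H-boxes.

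For the H-box of arity $1$, $2$, and $3$ the claim is a direct matrix computation. The arity-$1$ H-box is the vector $\ket{0}-\ket{1}$, the semantics of the $\pi$-labelled \dzx-state it is sent to. The arity-$2$ H-box is the $2\times2$ matrix with every entry $1$ except a single $-1$, i.e. $\sqrt2$ times the Hadamard $H=\tfrac1{\sqrt2}\begin{pmatrix}1&1\\1&-1\end{pmatrix}$, matching the scaled yellow box it is sent to. The arity-$3$ H-box, read with two inputs and one output, is the $2\times4$ matrix with every entry $1$ except a single $-1$ in the slot for output $1$ and inputs $1,1$; one checks directly that this equals $\sqrt2\,(H\circ\mathrm{AND})$, which is precisely the linear map computed by the \dzx-diagram it is sent to, using that the triangle-based construction of the AND gate indeed implements AND (see~\cite[Exercise 12.10]{CKbook}, which is also the right-hand side of rule \AndRule). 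The only care needed here is bookkeeping the $\sqrt2$ normalisation factors attached to the Hadamard translation; this is the main (though still routine) obstacle.

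Finally, for an H-box of arity $n>3$ the interpretation $\intf{\cdot}_\dzx$ is defined recursively: using \HFuseRule one rewrites the H-box as a diagram built from H-boxes of strictly smaller arity together with Z-spiders, and interprets that diagram. Well-definedness of this clause --- independence of the chosen \HFuseRule-decomposition --- is the observation recorded just before the proposition: any two such decompositions translate to \dzx-diagrams that are provably equal by completeness of \dzx, hence semantically equal. Semantics-preservation then follows by induction on the arity. Since \HFuseRule is sound for $\intf{\cdot}$, the $n$-ary H-box has the same semantics as its chosen decomposition; by the base cases $n\le3$ together with the inductive hypothesis, that decomposition is interpreted correctly by $\intf{\cdot}_\dzx$; and since $\intf{\intf{\cdot}_\dzx}$ is compositional, this transfers to the H-box itself, completing the proof.
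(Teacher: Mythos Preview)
Your argument is correct and is exactly the ``straightforward check'' the paper leaves implicit: reduce to generators by compositionality, verify the Z-spider, star, and the H-boxes of arity $\le 3$ directly, and handle higher-arity H-boxes by induction via \HFuseRule. The paper gives no further detail, so there is nothing to compare beyond noting that you have spelled out what it omits.
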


More interestingly, the ZH-calculus can prove that the composition of the interpretations, going into the \dzx-calculus, and back to the ZH-calculus gives the same linear maps:

\begin{proposition}\label{prop:zh-dzx-loop}
  Let $D$ be a ZH-diagram. $\zh \vdash \intf{\intf{D}_\dzx}_\zh = D$.
\end{proposition}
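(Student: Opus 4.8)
The plan is to prove the statement by structural induction on the phase-free ZH-diagram $D$, using that both interpretations are strict monoidal functors. By construction $\intf{\cdot}_\dzx$ is defined generator-by-generator and commutes with $\circ$, $\otimes$ and the symmetry structure on the nose (for H-boxes of arity $>3$ this is built into how the interpretation was defined, via a fixed \HFuseRule decomposition), and likewise $\intf{\cdot}_\zh$ sends each $\dzx$-generator to a fixed ZH-diagram and is strictly monoidal. Hence $\Phi(\cdot):=\intf{\intf{\cdot}_\dzx}_\zh$ is again a strict monoidal functor into ZH-diagrams, and since $\zh\vdash$ is a congruence (rules may be applied in an arbitrary context) it suffices to establish $\zh\vdash\Phi(g)=g$ for each basic generator $g$ of the phase-free calculus: the wire structure (identity, swap, cup, cap), the Z-spider $Z^n_m$ of arbitrary arity, the star, and the H-box $H^n_m$ of arbitrary arity. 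For the wire structure and the Z-spider there is nothing to do: chasing the two definition tables shows $\Phi$ fixes them literally. For the star, $\intf{\star}_\dzx$ is a $\dzx$ scalar diagram of value $\tfrac1{\sqrt2}$, whose translation back is a ZH scalar diagram of the same value, and $\zh\vdash\Phi(\star)=\star$ then follows from \ScalarRule of Lemma~\ref{lem:scalarcancel} together with \SpiderRule and \StrongCompRule.

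The heart of the argument is the H-box case, which I would do by induction on the arity, reducing to the case with all legs on one side by bending wires (the H-box and the cups/caps being covered by symmetry and by the already-settled wire case). For arity $1,2,3$ the interpretation is given explicitly in the table: $H_1=\hadaunit$ maps to the $\dzx$-state $\ket0-\ket1$, $H_2$ maps to a normalised Hadamard (the Hadamard box with a $\sqrt2$ scalar), and the three-legged H-box $\hadamult$ maps to the triangle diagram equal to it. Translating these back to ZH and unfolding the definitions of the grey spider \GreyDef, the negations \NotDef, \ZDef and the triangle~\eqref{eq:Z-triangle-dfn}, the identities for $H_1$ and $H_2$ reduce to \SpiderRule, \HFuseRule, the colour and $X$-spider rules of Lemma~\ref{lem:zx-inspired}, and scalar arithmetic (scalar H-boxes of arity $0$, if admitted, are handled as with the star). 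The arity-$3$ case is the crucial one: after translation one is left with a ZH-diagram built from ZH-triangles that has the semantics of $\hadamult$, and proving it provably equal to $\hadamult$ is exactly the content of \AndRule — equivalently, of its reformulation~\eqref{eq:and-rule-prime} obtained via \TriangleInverse of Lemma~\ref{lem:triangle-rules}. The remaining bookkeeping, namely passing between the bare three-legged H-box and the AND-gate form it appears in and cancelling the $\star$-scalars, only costs \SpiderRule-like rules, (BA1), (BA2), and the scalar rules.

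For arity $n\ge4$ I would invoke the definition of $\intf{\cdot}_\dzx$ directly: by construction $\intf{H^n}_\dzx=\intf{C}_\dzx$ for some fixed ZH-diagram $C$ obtained from $H^n$ by \HFuseRule, whose H-boxes all have strictly smaller arity. Then $\Phi(H^n)=\intf{\intf{C}_\dzx}_\zh=\Phi(C)$ by strict monoidality, $\zh\vdash\Phi(C)=C$ by the induction hypothesis applied to the pieces of $C$ together with congruence, and $\zh\vdash C=H^n$ is an instance of \HFuseRule; chaining these gives $\zh\vdash\Phi(H^n)=H^n$. Throughout, Propositions~\ref{prop:preservesemantics} and~\ref{prop:semantics2} give $\intf{\Phi(D)}=\intf{D}$, which is a useful sanity check (in particular it certifies that the scalar factors balance), but it is not a substitute for the syntactic derivations, since completeness is precisely what is still being proved.

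The step I expect to be the main obstacle is the arity-$3$ H-box together with the essential role of \AndRule there: reconciling the triangle-based and H-box-based presentations of the three-legged H-box, unfolding the triangle definition, and getting every star-scalar to cancel is the one place where the argument is genuinely more than routine diagram bookkeeping. A secondary, organisational difficulty is making the ``strict monoidal functor, hence check generators'' reduction fully precise, including the subtlety that the $\dzx$-translation of an H-box of arity $>3$ is only defined relative to a chosen \HFuseRule decomposition.
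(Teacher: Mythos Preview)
Your proposal is correct and follows essentially the same route as the paper: reduce to generators via monoidality, note that wires, swaps, cups, caps and Z-spiders are fixed on the nose, then treat the H-box of arity $1,2,3$ explicitly, with the arity-$3$ case being the only substantive one and resting on \AndRule. Your write-up is in fact more careful than the paper's own appendix---you spell out the strict-monoidal reduction, the star case, and the induction step for H-boxes of arity $\geq 4$ via \HFuseRule, whereas the paper leaves these implicit---but there is no genuine difference in strategy.
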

\begin{proof}
  For the details see Appendix~\ref{sec:proof-of-zh-dzx-loop}. The only non-trivial step is to check that the result holds with $D$ equal to the 3-ary H-box, for which rule \AndRule is used.
\end{proof}

\noindent We can now prove that the phase-free ZH-calculus is indeed complete:
\begin{theorem}
  The phase-free ZH-calculus is complete.
\end{theorem}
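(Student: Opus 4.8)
The plan is to run the standard ``round-trip'' completeness argument: use the two semantics-preserving translations $\intf{\cdot}_\dzx$ and $\intf{\cdot}_\zh$ to transfer an arbitrary semantic equality of phase-free ZH-diagrams to the $\dzx$-calculus, invoke Vilmart's completeness there~\cite{vilmart2018zxtriangle}, and transfer the resulting derivation back. So suppose $D_1$ and $D_2$ are phase-free ZH-diagrams with $\intf{D_1} = \intf{D_2}$; the goal is $\zh \vdash D_1 = D_2$.

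First I would push both diagrams through the ZH-to-$\dzx$ translation, obtaining $\dzx$-diagrams $\intf{D_1}_\dzx$ and $\intf{D_2}_\dzx$. By Proposition~\ref{prop:semantics2} this translation preserves the denoted linear map, so $\intf{\intf{D_1}_\dzx} = \intf{D_1} = \intf{D_2} = \intf{\intf{D_2}_\dzx}$, i.e.\ the two $\dzx$-diagrams denote the same matrix. Completeness of the $\dzx$-calculus then gives $\dzx \vdash \intf{D_1}_\dzx = \intf{D_2}_\dzx$. Applying Proposition~\ref{prop:dzx-to-zh} to this derivable equation (with its $D_1,D_2$ instantiated to $\intf{D_1}_\dzx,\intf{D_2}_\dzx$) yields $\zh \vdash \intf{\intf{D_1}_\dzx}_\zh = \intf{\intf{D_2}_\dzx}_\zh$. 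Finally Proposition~\ref{prop:zh-dzx-loop} supplies $\zh \vdash \intf{\intf{D_i}_\dzx}_\zh = D_i$ for $i=1,2$, and chaining the three ZH-derivations gives $\zh \vdash D_1 = \intf{\intf{D_1}_\dzx}_\zh = \intf{\intf{D_2}_\dzx}_\zh = D_2$, which is what we wanted. Note that universality (Proposition~\ref{prop:universality}) is not needed here, since $\intf{\cdot}_\dzx$ acts directly on diagrams.

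At this level the theorem is a one-line diagram chase; all the genuine content has already been factored into the auxiliary propositions, so the only thing to double-check in the main text is that $\intf{\cdot}_\dzx$ is well-defined on every ZH-diagram. The subtle case is the H-box of arity $>3$, whose translation is only pinned down once a decomposition via \HFuseRule is chosen; but any two such decompositions are mapped to $\dzx$-diagrams with equal semantics, hence to provably equal $\dzx$-diagrams, and that is exactly what the chase needs. The real obstacles are therefore Propositions~\ref{prop:dzx-to-zh} and~\ref{prop:zh-dzx-loop}, deferred to the appendices: the first requires replaying every rule of Figure~\ref{fig:triangle-rules} inside the phase-free ZH-calculus, and the second hinges on showing that the round-trip image of the $3$-ary H-box rewrites back to the H-box itself — the step where the new rule \AndRule is essential.
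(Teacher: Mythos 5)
Your proof is correct and is essentially identical to the paper's own argument: translate to $\dzx$, invoke Vilmart's completeness, translate back via Proposition~\ref{prop:dzx-to-zh}, and close the loop with Proposition~\ref{prop:zh-dzx-loop}. The additional remarks on well-definedness of $\intf{\cdot}_\dzx$ for higher-arity H-boxes and on where \AndRule enters match the paper's own discussion.
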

\begin{proof}
  Let $D_1$ and $D_2$ be ZH-diagrams with $\intf{D_1}=\intf{D_2}$. By Proposition~\ref{prop:semantics2} we then have $\intf{\intf{D_1}_\dzx} = \intf{\intf{D_2}_\dzx}$. By completeness of the \dzx-calculus~\cite{vilmart2018zxtriangle} we then have $\dzx \vdash \intf{D_1}_\dzx = \intf{D_2}_\dzx$. Proposition~\ref{prop:dzx-to-zh} then gives us $\zh \vdash \intf{\intf{D_1}_\dzx}_\zh = \intf{\intf{D_2}_\dzx}_\zh$. Finally, by transitivity of equality and Proposition~\ref{prop:zh-dzx-loop} we conclude that $\zh \vdash D_1=D_2$.
\end{proof}

\section{Minimality and interpretation of the rules}\label{sec:minimality}

In this section we will give an interpretation of the rules as stating identities concerning the classical COPY, XOR, NOT and AND gates, and we will give a few results regarding minimality of the rules.

We start by studying the minimality of the rules. The following rules (or combinations of rules) are necessary, because they are the only ones to break certain invariants of the diagrams:
\begin{itemize}
  \item Both \SpiderRule and \HFuseRule are needed since they are the only rules that relate arity 4 spiders, respectively H-boxes, to lower arity ones.
  \item \ZeroRule is needed, since this is the only rule that changes the parity of the amount of star generators in the diagram.
  \item At least one of \IDRule or \HHRule is needed, since these are the only rules that relate an empty wire to a non-empty wire.
  \item At least one of \StrongCompRule and \HCompRule with $n=m=0$ is needed, since those are the only rules that relate a non-empty diagram to an empty one.
\end{itemize}

For showing the necessity of the following pairs of rules, we construct different interpretations. In these interpretations, all the rules, except the ones under investigation, are true. Hence, those rules must be independent from the others. We will consider the rule-set of Figure~\ref{fig:phasefree-rules}, but with \AndRule replaced by \eqref{eq:and-rule-prime}, which is an equivalent rule by Lemma~\ref{lem:and-rule-prime}. For the interpretations we consider equality up to non-zero scalar, and we set $\tikzfig{star}=\tikzfig{empty-diagram}$, so that \ZeroRule becomes trivial.
\begin{itemize}
  \item At least one of \HHRule or \StrongCompRule is needed, because the interpretation where we change the H-box so that\ \  $\tikzfig{H-box-as-disconnected}~$ satisfies all other rules.
  \item At least one of \IDRule or \HCompRule is needed, because the interpretation where we change the Z-spider so that \ \ $\tikzfig{Z-spider-as-disconnected}~$ satisfies all other rules.
  \item At least one of \HCompRule or \OrthoRule is needed, because the interpretation where we change the H-spider so that\ \  $\tikzfig{H-box-as-white-dots}~$ satisfies all other rules.
\end{itemize}

Of course the full generality of rules \StrongCompRule and \HCompRule are not needed. For \StrongCompRule $n=m=0$, $n=2,m=0$ and $n=m=2$ is sufficient, as the remaining rules follow by induction (see e.g.~\cite{CKbook}), while for \HCompRule we also need $n=0,m=2$ due to a lack of colour symmetry. With this taken into account, we conjecture that all the rules except for \AndRule and \OrthoRule are needed, with at least one of these also being necessary.

\subsection{Interpreting the rules}
Most of the rules state a property of certain classical maps, as we will now show. We make the following identifications:

\ctikzfig{classical-interpretation}

The spider rules then simply state the following evidently true identities for these classical maps:

\ctikzfig{classical-spider}

Following \cite[Proposition 8.19]{CKbook}, the bialgebra rules \StrongCompRule and \HCompRule state that XOR and AND are \emph{function maps}, i.e.\ that they send computational basis states to computational basis states:

\ctikzfig{classical-bialgebra}

\noindent In the presence of the other rules it is straightforward to verify that \NotRule is equivalent to two NOT gates cancelling:
\begin{equation*}
    \tikzfig{multiply-rule} \qquad \iff \qquad \tikzfig{classical-notnot} 
\end{equation*}

Finally, \OrthoRule can also be seen as a statement about the classical functions by changing the output into an input:
\begin{equation*}
    \tikzfig{ortho-rule} \quad \iff \quad \tikzfig{ortho-rule-mod} \quad \iff \quad \tikzfig{classical-ortho}
\end{equation*}
Written in this way we clearly see why \OrthoRule is true: the copied input is sent to both AND's, but for one it is negated. Hence the output of at least one of the AND gates will be zero so that the only way for the outputs of these AND gates to be equal, is if they are both zero.

The only rules we can not give any classical interpretation for are the newly added ones \ZeroRule and \AndRule, but considering that most of our proofs do not use these rules it raises the following question: how is it that we can prove things about quantum maps using these rules that seem to say something about classical maps? The missing pieces here are that the generators can be combined in more interesting ways due to the presence of cups and caps, and that we have a map, the Hadamard, that does two things at the same time: it converts COPY's into XOR's and back, and it can be composed with an AND gate in order to make it symmetric with respect to yanking wires. Somehow the existence of a map like that is sufficient to recover all the structure of linear maps between qubits.

\section{Conclusion and further work}\label{sec:conclusion}

We have given a rule-set for the phase-free ZH-calculus that is complete. Apart from a simple rule to deal with zero scalars, we have added one more new rule which relates the H-box to the triangle node used in the ZX-calculus. We have discussed the minimality of the used rules, and have given a classical interpretation of most of the rules. 

The completeness proof uses a reduction to the \dzx-calculus of Ref.~\cite{vilmart2018zxtriangle}, which in turn uses the completeness of the ZW-calculus~\cite{HarnyAmarCompleteness}. It would be interesting to see if a more direct proof of completeness can be constructed, i.e.\ by mimicking the way the normal-form is constructed in Ref.~\cite{HarnyAmarCompleteness}. In particular, because our proof used this reduction, we needed to add a generator to represent the scalar $\frac{1}{\sqrt{2}}$, which requires us to include the rule \ZeroRule. If the completeness can be shown directly, then we might be able to replace this by a generator to represent $\frac12$, which does not need such a rule.

It is not clear whether the rule \AndRule is necessary for completeness. We only use it to prove the equation (BW) of Figure~\ref{fig:triangle-rules}, and to prove that the interpretation of the H-box into the \dzx-calculus can be transformed back into an H-box. It is then a possibility that by doing the completeness proof natively in the ZH-calculus, we do not need the full power of \AndRule, and it suffices, for instance, to add (BW) as an axiom.

\textbf{Acknowledgements}: JvdW is supported in part by AFOSR grant FA2386-18-1-4028. The authors would like to thank Aleks Kissinger for valuable comments and support.

\bibliographystyle{eptcs}
\bibliography{main}

\appendix

\section{Proofs of derived rules}\label{sec:derived-rules}

\begin{proof}[Proof of Lemma~\ref{lem:scalarcancel}.]
    Note that by applying \StrongCompRule and \HCompRule with $n=m=0$ we already get two of the equations. For the remaining equations we simply prove them:
    \ctikzfig{scalar-rule-proof}
    \ctikzfig{HH-scalar-cancel-proof}
\end{proof}

\begin{lemma}\label{lem:negate-direct} ~
    \ctikzfig{negate-direct}
\end{lemma}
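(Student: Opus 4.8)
The plan is to reduce the left-hand side to a diagram built only from Z-spiders, H-boxes and star scalars, and then simplify it. The white NOT (negate) generator is introduced in~\eqref{eq:Z-triangle-dfn} in terms of the grey NOT, which is defined via \NotDef in terms of a grey spider, which in turn unfolds via \GreyDef into a Z-spider whose legs carry Hadamard boxes (arity-$2$ H-boxes) together with the appropriate copies of the star. So the first step is simply to substitute these definitions into the claimed identity, so that everything is expressed in the core generators of the phase-free calculus.

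After unfolding, the diagram contains short chains of adjacent Hadamard boxes coming from the repeated conjugation $H\,(\cdot)\,H$. I would cancel these in pairs using \HHRule, fuse any remaining H-boxes with \HFuseRule, and fuse the Z-spiders that have become connected with \SpiderRule; if at some stage an H-box must be pushed past a Z-spider, that step is handled by the Hadamard bialgebra \HCompRule. What remains is a single Z-spider carrying one extra leg ending in a $1$-legged H-box, \ie the ``$-1$ state'' sitting on a leg, which is exactly the direct form asserted for the negate. If the statement also records an auxiliary fact such as two negates composing to the identity, that drops out immediately from \NotRule once both are brought into this form.

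The only genuinely fiddly point is tracking the star scalars: each Hadamard box in the unfolded picture carries a $\tfrac{1}{\sqrt{2}}$, and the definitions of the grey spider and grey NOT contribute further stars, so one must check that after all the \HHRule-cancellations the leftover stars appear in exactly the multiplicity needed to be removed by the scalar identities \ScalarRule of Lemma~\ref{lem:scalarcancel}. No zero scalars are produced, so \ZeroRule plays no role. Beyond this bookkeeping — and being careful that the ``NOT past a Z-spider'' step is used with the correct arities, since a Z-spider first copies the H-box onto each of its legs before they fuse back together — the argument uses nothing beyond the rules and derived rules already in hand, so I expect this scalar accounting to be the main (and only real) obstacle.
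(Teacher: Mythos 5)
Your argument is correct and is essentially the paper's own proof: one unfolds \ZDef, \NotDef and \GreyDef into Z-spiders, H-boxes and stars, cancels the adjacent Hadamard pairs with \HHRule, fuses the remaining generators with \SpiderRule and \HFuseRule, and clears the resulting factors of $2$ against the leftover stars via \ScalarRule, leaving a plain Z-spider with a single extra leg terminated by a unary H-box. The hedged appeals to \HCompRule and to spider-copying are not actually needed, and the scalar bookkeeping you flag is indeed the only point requiring care.
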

\begin{proof}~
    \ctikzfig{negate-direct-proof}
\end{proof}

\begin{proof}[Proof of Lemma~\ref{lem:zx-inspired}.] We prove all the equations in turn.
  \ctikzfig{X-spider-proof}
  \ctikzfig{X-special-proof}
  \ctikzfig{ZZ-cancel-proof}
  \ctikzfig{XX-cancel-proof}
  \ctikzfig{colour-change-proof}
  \ctikzfig{H-NOT-commute-proof}
  \ctikzfig{NOT-commute-proof}
  For the colour-reversed version of this last equation, we simply use \ColourRule.
\end{proof}





\begin{proof}[Proof of Lemma~\ref{lem:statecopy}.]~
  The rules not containing any $\neg$'s all follow from a single application of \StrongCompRule or \HCompRule. The others can be shown by expanding the definition using \NotDef and \ZDef and then using some combination of the previous rules with \ColourRule and \HFuseRule, e.g. :
  \ctikzfig{state-copy-proof}
  The only one that is more complicated, is the last one, which seems to require \OrthoRule to be proven:
  \ctikzfig{both-rule-proof}
\end{proof}

\begin{proof}[Proof of Lemma~\ref{lem:commutation}.] We simply prove all the rules in turn.
  \ctikzfig{CZ-correct-proof}
  \ctikzfig{X-Z-commute-proof}
  \ctikzfig{hopf-rule-proof}
  For the final equation in this lemma, let us first give a simple proof using \AndRule:
  \ctikzfig{dedup-simple-proof}
  It is in fact possible to derive this rule without using \AndRule. This proof is lengthy and inspired by the normal-form of Ref.~\cite{backens2018zhcalculus}. It can be found in Appendix~\ref{sec:proof-of-dedup}.
  
\end{proof}

\begin{lemma}\label{lem:triangle-Z}~
  \ctikzfig{triangle-Z}
\end{lemma}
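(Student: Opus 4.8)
The plan is to prove the stated identity by unfolding the triangle into its definition \TriangleDef, commuting the white negate (the $Z$-gate) past the internal Z-spiders and H-boxes of that definition using the derived rules collected in Lemmas~\ref{lem:zx-inspired}--\ref{lem:commutation}, and then re-recognising the result as the intended right-hand side.

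First I would expand both the triangle and the negate using \eqref{eq:Z-triangle-dfn} and \NotDef, together with \ZDef, so that the whole equation is rewritten purely in terms of Z-spiders, H-boxes and the star scalar; in particular the negate becomes an H-box pair around a Z-spider sitting on one leg of the triangle gadget. The core of the argument is then to slide this H-box/negate structure through the internal wiring of the triangle. The useful moves here are \ColourRule (to convert between white-negate and grey/X form wherever that simplifies a commutation), the commutation identities \XZCommute and \HopfRule, and the state-copy rules of Lemma~\ref{lem:statecopy} for pushing the resulting $\neg$ (or $\ket 0 / \ket 1$) past a branching spider; after each such move one fuses the spiders and H-boxes that appear with \SpiderRule and \HFuseRule. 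Once the negate has been transported onto the other leg (or, depending on the exact form of the statement, duplicated onto both legs and/or turned the triangle upside down), I fold the triangle definition back up on that side, and use \ScalarRule to move the hidden $\tfrac{1}{\sqrt2}$ factors around so that the two sides have matching normalisation.

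The main obstacle I expect is the asymmetry of the triangle: its two legs play genuinely different roles, so one must be careful which leg the negate starts on and must keep track of the flipped (transposed) triangle when it is produced — a blind application of the commutation rules will not respect the orientation. The only other point requiring care is scalar bookkeeping, ensuring the $\tfrac{1}{\sqrt2}$'s coming from \NotDef and from the triangle definition cancel correctly; this is routine given \ScalarRule and the non-\ZeroRule part of Lemma~\ref{lem:scalarcancel}, but it is where an off-by-a-scalar error would most easily creep in.
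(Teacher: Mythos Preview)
Your plan is essentially the paper's approach: the paper's proof is a short direct diagrammatic calculation that unfolds the triangle via \TriangleDef, pushes the white negate through the resulting Z-spider/H-box structure using the spider and colour/commutation lemmas, and then refolds. Your sketch lists more machinery than is actually required (the paper does not need \XZCommute or \HopfRule here, and the scalar bookkeeping is trivial), but the core strategy of unfold--commute--refold is exactly what the paper does.
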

\begin{proof}~
  \ctikzfig{triangle-Z-proof}
\end{proof}
  
\begin{proof}[Proof of Lemma~\ref{lem:triangle-rules}.]~
\ctikzfig{triangle-def2-proof}
\ctikzfig{triangle-inverse-proof1}
\ctikzfig{triangle-inverse-proof2}
\end{proof}

\section{Proof of Proposition \ref{prop:dzx-to-zh}}\label{sec:proof-of-dzx-to-zh}
We need to show that whenever the \dzx-calculus proves two diagrams $D_1$ and $D_2$ to be equal, that the ZH-calculus proves $\intf{D_1}_\zh$ and $\intf{D_2}_\zh$ to be equal. In the $\dzx$-calculus diagrams are proven to be equal when some topological deformation in combination with an application of the rules transforms one diagram into the other. Since the interpretation $\intf{\cdot}_\zh$ preserves the topology, the only thing left to check then is that all the rules of the \dzx-calculus as given in Figure~\ref{fig:triangle-rules}, when transformed using $\intf{\cdot}_\zh$ are provable in the ZH-calculus. We proceed rule by rule.

\noindent \textbf{(S1)}: When the two spiders are connected by a single connection this follows easily from \SpiderRule and \ZInvolutionRule. For multiple connections we first fuse one connection, and then remove the resulting self-loops with the following:
\ctikzfig{remove-self-loop}

\noindent \textbf{(S2$_g$)} and \textbf{(S2$_r$)}: These are proven respectively by \IDRule and \XIdRule.

\noindent \textbf{(IV)}:
\begin{equation*}
\intf{\tikzfig{IV}}_\zh \ \ = \ \ \tikzfig{IV-proof}
\end{equation*}

\noindent \textbf{(B1)} and \textbf{(B2)}: Trivial using \StrongCompRule and some introduction of scalars using \ScalarRule.

\noindent \textbf{(H)}: Trivial given how the interpretation $\intf{\cdot}_\zh$ is defined.

\noindent \textbf{(Z)}: Note that the legless Z-spider with a $\pi$ phase is translated to \tikzfig{zero-scalar}, and hence this rule follows easily using first \ScalarRule, and then \ZeroRule.

\noindent \textbf{(T0)}: 
\begin{equation*}
\intf{\tikzfig{T0-left}}_\zh \ \ = \ \ \tikzfig{T0-left-proof} \ \ = \ \ \intf{\tikzfig{T0-right}}_\zh
\end{equation*}

As the translation from \dzx to ZH and back is straightforward, we will no longer write it for the following proofs.

\noindent \textbf{(BW)}: This is the only axiom for which we need to use \AndRule:
\ctikzfig{BW-proof}

\noindent \textbf{(HT)}: We could also easily prove this axiom using \AndRule, similarly to \textbf{(BW)}, but we can in fact prove it without doing that:
\ctikzfig{HT-proof}

\noindent \textbf{(TCX)}: This simply requires a few applications of \HCompRule:
\ctikzfig{TCX-proof}
\medskip

\noindent To prove the final rule, we will require the following lemma:
\begin{lemma}\label{lem:TW-lemma}~
\ctikzfig{TW-lemma}
\end{lemma}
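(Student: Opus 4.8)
The statement to prove is a purely diagrammatic identity in the ZH-calculus (the \dzx-rule (TW) having been translated via $\intf{\cdot}_\zh$), relating a triangle composed with a W-node (grey multiplication / copy) configuration on one side to a rearrangement on the other. My plan is to work entirely inside the phase-free ZH-calculus, unfolding the triangle via its definition \eqref{eq:Z-triangle-dfn}, and then pushing everything through the bialgebra and copy rules.

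First I would expand both triangles appearing in the statement using the definition \TriangleDef, so that the diagram is expressed purely in terms of Z-spiders, grey (X) spiders, H-boxes, NOTs and stars. Next I would use the already-established derived rules for the triangle — in particular \TriangleInverse and the rule of Lemma~\ref{lem:triangle-rules} (i.e.\ \TriangleInverse together with \triangledef) — together with Lemma~\ref{lem:triangle-Z} to simplify the composition of the triangle with the adjacent spider, rather than grinding through the raw definition. The grey/white bialgebra interplay (\StrongCompRule, \HCompRule) and the Hopf rule \HopfRule from Lemma~\ref{lem:commutation} should let me disconnect and recombine the wires coming out of the W-node. I expect to also need \ColourRule and \HFuseRule to move H-boxes past the colour change, and \NotCopyRule (the NOT-copy rule) together with \StateCopy from Lemma~\ref{lem:statecopy} to handle how the NOT and the copied legs interact — since the W-node is precisely a copy/XOR structure, copying a NOT through it is exactly what makes the two sides match up.

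The main obstacle I anticipate is the bookkeeping of scalars: the triangle definition hides stars and legless Z-spiders, and the bialgebra moves introduce factors of $2$ (a legless Z-spider) and factors of $\tfrac{1}{\sqrt 2}$ (stars), so the two sides will initially differ by a scalar sub-diagram that must be cancelled using \ScalarRule and \ZeroRule. I would therefore carry the scalar along symbolically and only at the end invoke Lemma~\ref{lem:scalarcancel} to clear it. A secondary subtlety is that the triangle is not symmetric, so I must be careful which orientation (\TriangleInverse versus its transpose, as fixed by \tikzfig{triangle-flip-dfn}) appears after each deformation; getting an orientation wrong would make the NOT-copy step fail. Once the scalar is cleared and the orientations are tracked correctly, the remaining steps are the routine Hopf / bialgebra simplifications, and the two diagrams coincide.
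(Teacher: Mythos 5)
Your submission is a plan, not a proof. You list a plausible toolbox --- unfolding \TriangleDef, invoking \TriangleInverse and Lemma~\ref{lem:triangle-Z}, pushing things through \StrongCompRule, \HCompRule, \HopfRule, \ColourRule, \NotCopyRule and \StateCopy, and clearing scalars with \ScalarRule at the end --- but you never execute a single rewrite. The entire content of a lemma like this one lives in the concrete chain of diagram transformations: which rule is applied to which sub-diagram, in which order, and whether the connectivity and scalars actually line up after each step. The paper's proof is exactly such a chain; your text contains none of it, so there is nothing to check against the claimed identity.

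Worse, the two difficulties you flag as ``anticipated obstacles'' --- the scalar bookkeeping (stars and legless Z-spiders introduced by the triangle definition and the bialgebra moves) and the orientation of the non-symmetric triangle under transposition --- are precisely the places where attempted derivations of triangle/W-node identities most often break down. You acknowledge that ``getting an orientation wrong would make the NOT-copy step fail'' and then leave the orientation undetermined. Until you commit to a specific sequence of rule applications and verify that the left- and right-hand sides coincide, including the scalar sub-diagram, the lemma remains unproven. I would encourage you to carry out the derivation explicitly, tracking the star count at every step, and only then compare it with the proof given in Appendix~\ref{sec:proof-of-dzx-to-zh}.
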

\begin{proof}~
\ctikzfig{TW-lemma-proof}
\end{proof}

\noindent \textbf{(TW)}: 
\ctikzfig{TW-proof}

\pagebreak[3]
\section{Proof of Proposition \ref{prop:zh-dzx-loop}}\label{sec:proof-of-zh-dzx-loop}

We need to show for all the generators $D$ of the ZH-calculus, that $\zh \vdash \intf{\intf{D}_\dzx}_\zh = D$. Looking at the definition of these interpretation maps, we see that we immediately get equality for the identity, cup, cap, swap and Z spider. 

For the unary H-box we calculate:
$$
\intf{\intf{\hadastate{}}_\dzx}_\zh \ \ = \ \ \intf{\tikzfig{pi-state}}_\zh =\ \  {\,\tikz{\node[style=white dot] (x) {$\neg$};\draw(x)--(0,0.75);}\,} \ \ =\ \  \hadastate{}
$$

For the binary H-box we calculate:
$$
\intf{\intf{\tikzfig{had}}_\dzx}_\zh \ \ =\ \ \intf{\tikzfig{Hadamard-normalised}}_\zh =\ \  \tikzfig{had-unnormalised}
$$

And finally, with the arity-3 H-box:
$$
\intf{\intf{\hadamult}_\dzx}_\zh \ \  = \ \ \intf{\scalebox{0.9}{\tikzfig{AND-triangle}}}_\zh =\ \  \tikzfig{AND-triangle-normalise}
$$

\pagebreak[3]
\section{Proof of \texorpdfstring{\HadDoubleRule}{(DC)}}\label{sec:proof-of-dedup}
The normal-form construction in Ref.~\cite{backens2018zhcalculus} relies critically
on the use of arbitrary parameters on H-boxes, which is why it does not translate
to a completeness proof for the phase-free fragment. However, some parametrised H-boxes have phase-free equivalents (which can be shown using e.g.\ (A)), for instance: 
\ctikzfig{phasefree-params-demo}
While these equivalences obviously cannot be proven or even used directly
within the phase-free calculus, they show we might be able to do some normal-form inspired proofs. To do so in the case of \HadDoubleRule , we will need some lemmas. 
In this section we will use \emph{!-box} (bang-box) notation as used in Ref.~\cite{backens2018zhcalculus} to denote rules that hold for any number of wires. Furthermore, we will use \ScalarRule implicitly to introduce the required scalars for applications of the desired rules.

\pagebreak[3]
\begin{lemma}\label{lem:not-bang}~
The phase-free analogue of (M!) from Ref.~\cite{backens2018zhcalculus} holds:
\ctikzfig{not-bang-rule}
\end{lemma}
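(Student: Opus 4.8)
The statement to prove is Lemma~\ref{lem:not-bang}, the phase-free analogue of the (M!) rule from Ref.~\cite{backens2018zhcalculus}, which generalises the multiply rule \NotRule\ to a bundle of wires: a NOT on each of the $n$ wires entering an H-box (or a Z-spider feeding into it) can be pulled through at the cost of a single additional NOT, or cancelled in pairs. My plan is to reduce the $n$-wire statement to the single-wire rule \NotRule\ already available, by peeling off one wire at a time, using the spider-fusion rules \SpiderRule\ and \HFuseRule\ to split the H-box/Z-spider into a binary piece plus the rest. The key is that !-box rules of this shape are always proved by induction on the number of wires in the bang-box: the base case ($n=0$ or $n=1$) is either trivial or is exactly \NotRule, and the inductive step fuses one leg back on.

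\textbf{Key steps, in order.} First I would handle the base cases: for zero wires the rule is vacuous, and for one wire it is (a mild rearrangement of) \NotRule\ itself, possibly combined with \NotDef\ and the involution rule \NotCopyRule\ or \ColourRule\ to match the exact diagrammatic form. Second, for the inductive step, given the rule for $n$ wires I would use \HFuseRule\ (resp.\ \SpiderRule) to write the $(n+1)$-ary H-box as a $2$-ary H-box fused onto an $n$-ary one, isolating a single wire; apply the induction hypothesis to the $n$-wire bundle on one branch and the single-wire \NotRule\ to the isolated wire; then the two ``extra'' NOTs that get produced collapse into one using \NotCopyRule\ (two NOTs cancel, or a NOT copies through the Z-spider) together with spider fusion. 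Third, I would note that the companion ``cancellation'' form of the rule (an even number of NOTs on the bundle disappears) follows immediately from the main form by applying it twice, or from the involution NOT$\circ$NOT $=$ id via \NotCopyRule. Since the excerpt says \ScalarRule\ is to be used implicitly to supply scalars, I would not track the $1/\sqrt 2$ or $2$ scalars explicitly, invoking \ScalarRule\ where needed.

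\textbf{Main obstacle.} The genuinely delicate point is the bookkeeping of the single leftover NOT in the inductive step: naively, peeling $n+1$ wires produces $n+1$ NOTs on the far side, and one must see that all but one cancel. This works because a NOT commuting past a Z-spider \emph{copies} (by \NotCopyRule, the $\neg$CP rule), so the NOTs that the induction hypothesis deposits on the ``merge'' wire and the NOT that \NotRule\ deposits there meet at the same spider and annihilate in pairs, leaving a single one. Getting the diagrammatic manipulation to line up — in particular choosing whether to fuse on the H-box side or route everything through an intermediate grey/Z-spider via \NotDef\ — is where the real care is needed; everything else is routine application of \SpiderRule, \HFuseRule, \NotRule, \NotCopyRule\ and \ScalarRule.
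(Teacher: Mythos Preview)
Your inductive strategy—peel off one wire at a time with \HFuseRule, apply the single-wire rule \NotRule, and manage the leftover NOTs via the involution/copy rules—is correct and is essentially what the paper does: its proof is a short chain of diagrammatic equalities that unfuses the H-box with \HFuseRule, applies \NotRule\ at the binary piece, and refuses, i.e.\ your induction unrolled. One small remark: the ``main obstacle'' you flag is milder than you suggest, because the multiply structure in \NotRule\ already sits on the merge wire after unfusing, so the NOTs collapse immediately without a separate \NotCopyRule\ bookkeeping step; otherwise your plan and the paper's argument coincide.
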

\begin{proof}~
\ctikzfig{not-bang-proof}
\end{proof}

The following is a more general version of the lemmas in Ref.~\cite[Appendix B]{backens2018zhcalculus}, and can in turn be seen as a generalisation of \OrthoRule.

\pagebreak[3]
\begin{lemma}\label{lem:splitting}~
    Instead of just wires, white spiders between an arbitrary number of H-boxes can be split using \OrthoRule:
\ctikzfig{splitting-rule}
\end{lemma}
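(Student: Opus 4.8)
The plan is to prove the splitting rule by induction on the number of H-boxes attached to the central white spider --- equivalently, on the number of wires in the !-box --- with \OrthoRule doing the real work at each step. Since the statement is a direct generalisation of \OrthoRule from two H-boxes to an arbitrary family, the natural move is to peel the H-boxes off one at a time.

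First I would dispose of the small cases: with no H-boxes or a single H-box the equality is immediate (or follows from \SpiderRule after absorbing a scalar via \ScalarRule), and with two H-boxes it is \OrthoRule itself, up to reorganising the connectivity with \SpiderRule and \HFuseRule. For the inductive step I would use \SpiderRule to split off one leg of the white spider so that a single H-box sits on a sub-spider of its own, apply the induction hypothesis to the spider still carrying the remaining $n-1$ H-boxes, and then re-fuse. The bialgebra rules \StrongCompRule and \HCompRule are what let the newly created spiders be commuted past the H-boxes and X-spiders produced along the way, and \HFuseRule is used to recombine any H-box that the rewriting temporarily splits.

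The step I expect to be the main obstacle is the bookkeeping of the NOT-decorated connections. Like \OrthoRule, the splitting rule involves negated wires between some of the H-boxes, and under the induction these negations must be propagated across the branching spiders with the correct parity; here I would rely on Lemma~\ref{lem:not-bang} (the phase-free analogue of (M!)), together with \NotCopyRule and \NotRule, to copy and cancel NOTs as the spiders split and merge. A secondary, purely cosmetic nuisance is keeping the $\sqrt 2$-scalars coming from the star generator and from each use of \OrthoRule consistent, which is exactly why the lemma is to be read with \ScalarRule applied implicitly throughout, as announced at the start of the appendix.
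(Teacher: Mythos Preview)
Your inductive plan is a workable route, but it is not what the paper does. The paper's proof is a single chain of rewrites carried out \emph{uniformly} in the !-box: using \SpiderRule it unfuses the central white spider so that the two distinguished H-boxes sit on their own binary copy, which is literally the left-hand side of \OrthoRule; after one application of \OrthoRule, the remaining H-box legs (still inside the !-box) are pushed back across using \HCompRule and \HFuseRule and the spiders are refused with \SpiderRule. No case split on the number of H-boxes is ever made, and \NotBangRule, \NotCopyRule and the parity bookkeeping you anticipate for the NOTs are not needed here---the negations stay attached to the same two H-boxes throughout, exactly as in the bare \OrthoRule.

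The difference is mainly one of packaging. Your induction on~$n$ would, at each step, unfuse one extra leg, invoke the hypothesis, and then reattach; but reattaching the peeled-off H-box to the already-split structure is precisely the \HCompRule/\HFuseRule manoeuvre the paper performs once and for all with the !-box in place. So your approach is correct in outline but longer, and the ``main obstacle'' you identify (NOT propagation across the branching) does not actually arise: the !-boxed legs are spectators that never touch the negated wires. If you want to match the paper, drop the induction and instead show that \SpiderRule lets you isolate the \OrthoRule-shaped core while the !-box rides along unchanged.
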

\begin{proof}~
\ctikzfig{splitting-proof}
\end{proof}


\pagebreak[3]
\begin{lemma}\label{lem:ortho-bang}~
    \OrthoRule can be applied on H-boxes connected by more than one wire.
\ctikzfig{ortho-bang-rule}
\end{lemma}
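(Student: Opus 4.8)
The plan is to reduce the multi-wire case to the single-wire \OrthoRule, essentially by an induction on the number $k$ of wires joining the two H-boxes. For $k=1$ the statement is \OrthoRule itself, after first inserting a Z-spider on the connecting wire with \SpiderRule so that the picture literally matches the shape of \OrthoRule. So the work is entirely in the inductive step: given the result for $k-1$ connecting wires, deduce it for $k$.

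For the inductive step I would first put a Z-spider on each of the $k$ connecting wires using \SpiderRule, so that the bundle factors as $k$ independent white spiders feeding both H-boxes. Applying \HCompRule to one of these white spiders pulls it through an H-box and separates the diagram into a sub-diagram with only $k-1$ connecting wires (to which the induction hypothesis applies) together with a leftover single-wire ortho configuration (to which \OrthoRule applies directly); afterwards I recombine the pieces with \HFuseRule and \SpiderRule, using \ScalarRule implicitly to balance the scalars these fusions produce. A cleaner packaging of the same idea is to invoke Lemma~\ref{lem:splitting} directly, since it already establishes that the white spiders sandwiched between H-boxes may be split using \OrthoRule; combined with the bialgebra rules \StrongCompRule and \HCompRule this collapses the $k$-wire bundle in one pass rather than recursively.

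The delicate point --- and the step I expect to be the main obstacle --- is the bookkeeping of the NOT gate. In \OrthoRule exactly one of the two H-box legs carries a NOT, and as the $k$-wire bundle is split and re-fused that NOT has to end up on exactly the right wire at every stage. To control this I would use Lemma~\ref{lem:not-bang}, the phase-free analogue of (M!), to move a whole block of NOTs through a Z-spider or H-box at once, together with the copy rules of Lemma~\ref{lem:statecopy} to push NOTs across the Z-spiders introduced by \SpiderRule, expanding via \NotDef when needed. Organising the induction around a fixed intermediate shape --- a white spider between two H-boxes with $k$ parallel wires and a single NOT on one leg --- should keep the inductive invariant transparent even though the intermediate rewriting is somewhat tedious.
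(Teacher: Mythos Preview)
Your high-level strategy matches the paper's: both proofs proceed by induction on the number $k$ of parallel wires between the two H-boxes, with the single-wire case reducing to \OrthoRule itself. The paper additionally treats $k=0$ as a separate (trivial) base case, which you omit; it is worth including since the !-box formulation allows the bundle to be empty. Your second, ``cleaner'' packaging via Lemma~\ref{lem:splitting} is indeed the relevant tool, and the paper's inductive step relies on exactly that lemma together with \NotBangRule rather than on raw applications of \HCompRule.

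One caution: your first description of the inductive step is too optimistic. Applying \HCompRule to a single inserted Z-spider does not cleanly ``separate the diagram into a sub-diagram with only $k-1$ connecting wires together with a leftover single-wire ortho configuration''; bialgebra produces a complete bipartite pattern, not two disjoint pieces. The actual mechanism is the one you sketch afterwards: use \SplitLemma to split the white spiders sitting between the H-boxes, which yields a configuration to which the inductive hypothesis applies directly, and use \NotBangRule to manage the NOT rather than the more ad hoc combination of \NotDef and the copy rules you propose. So your instinct about which lemmas are needed is right, but the ``bialgebra then recombine'' route does not work as stated; commit to the \SplitLemma-based argument.
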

\begin{proof}~
    This is obviously true with 0 wires, and 1 wire is just \OrthoRule. By induction: \\
    \scalebox{0.9}{\tikzfig{ortho-bang-altproof}}
\end{proof}

\pagebreak[3]
\begin{lemma}\label{lem:dyadic-intro}~
    A phase-free analogue of (I!) from Ref.~\cite{backens2018zhcalculus} for a limited family of diagrams holds:
\ctikzfig{dyadic-comp-intro}
\end{lemma}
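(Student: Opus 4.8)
The plan is to follow the template by which the banged intro rule (I!) is proved in Ref.~\cite{backens2018zhcalculus}, namely induction on the number of wires controlled by the !-box, but to replace every appeal to an arbitrarily labelled H-box by an appeal to the label-$(-1)$ H-box together with explicit scalar bookkeeping. This is precisely why the statement has to be restricted to a limited family of diagrams: only for those members does the would-be arbitrary H-box label stay inside the dyadic fragment $\mathbb{Z}[\frac{1}{\sqrt 2}]$ that the phase-free calculus can express. First I would settle the base case — zero, or at most one, !-box wire. Here one expands the Z-spiders and H-boxes to their defining subdiagrams, fuses everything in sight with \SpiderRule and \HFuseRule, and collapses the leftover scalar fragments with \ScalarRule (invoking \ZeroRule should a zero scalar ever appear). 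For one wire this base case is, up to normalisation by star generators, essentially the phase-free instance of the original rule (I), which the rule-set of Figure~\ref{fig:phasefree-rules} already derives.

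For the inductive step I would peel one wire off the !-box and feed it into a fresh white spider, which I then split across the relevant H-boxes using \SplitLemma, or — when that wire joins two H-boxes directly — using \OrthoBangRule; any $\neg$ nodes generated in the process are transported through and annihilated by \NotBangRule. The split produces, modulo scalars reconciled by \ScalarRule, a summand of the same shape with one fewer !-box wire, to which the induction hypothesis applies, plus a residual fragment that the bialgebra rules \StrongCompRule and \HCompRule flatten; re-fusing with \SpiderRule and \HFuseRule then rebuilds the intended right-hand side.

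The main obstacle I expect is twofold, and it is the difficulty that pervades this appendix. First, \SplitLemma and \OrthoBangRule only fire when the white spiders occupy the right position relative to the H-boxes, so the inductive step will usually need a preparatory rearrangement — pushing the peeled wire into the correct configuration with \HCompRule, and sometimes \StrongCompRule — and verifying that this rearrangement is available for every member of the limited family is the delicate point. Second, each split, fusion or bialgebra move injects or removes a factor of $\frac{1}{\sqrt 2}$ or $2$, represented in the phase-free calculus by star generators and legless Z-spiders respectively; these have to be tracked carefully, reconciled with \ScalarRule, and one must confirm that no genuine zero scalar is accidentally created (which would force a detour through \ZeroRule). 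Past these bookkeeping hurdles I would expect the argument to be routine and to mirror Ref.~\cite{backens2018zhcalculus} step for step.
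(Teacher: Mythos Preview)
Your plan gathers the right ingredients --- \NotBangRule, \SplitLemma and \OrthoBangRule are precisely the lemmas the paper sets up in preparation for this one --- but the outer structure you propose is heavier than what the paper actually does. You frame the argument as a fresh induction on the number of !-box wires, peeling one off at a time; the paper instead gives a single direct chain of rewrites, because \OrthoBangRule already carries its own !-box and so dispatches all the wires in one application. In fact your own inductive step invokes \OrthoBangRule, which is a signal that the wire-by-wire peeling is not doing real work: once \OrthoBangRule is available you apply it once, push the resulting $\neg$-nodes through with \NotBangRule, and tidy up with \HFuseRule, \SpiderRule and \ScalarRule, with no recursion needed. Your route would still go through, but it re-derives by hand an induction that has already been packaged into \OrthoBangRule.

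Two smaller remarks. The word ``summand'' is out of place: the splitting lemmas transform one diagram into another equal diagram, not into a formal sum, so after a split there is no decomposition into pieces to which the induction hypothesis is applied separately --- you still have a single diagram to rewrite. And your concern about \ZeroRule is a red herring: no zero scalar arises anywhere in this derivation (the scalar traffic is entirely in non-zero powers of $2$ and $\tfrac{1}{\sqrt 2}$, handled by \ScalarRule), and the whole point of this appendix is to avoid the exotic rules, so if a zero scalar appeared it would indicate a wrong turn rather than a legitimate call for \ZeroRule.
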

\begin{proof}~
\ctikzfig{dyadic-comp-intro-proof}
\end{proof}

\pagebreak[3]
\begin{lemma}\label{lem:id-schur}~
    The identity can be brought to the following diagram resembling its normal-form:
\ctikzfig{id-schur}
\end{lemma}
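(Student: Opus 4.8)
The plan is to obtain the target diagram by rewriting a bare identity wire, following the shape of the Backens--Kissinger normal-form construction but staying inside the phase-free fragment, and leaning on the four lemmas just established. Since the right-hand side is the normal-form-style encoding of the $2\times 2$ identity matrix, with its two unit entries on the diagonal and the two vanishing entries off it, the argument splits into three logical parts: grow the scaffolding, fix the connectivity, and realise the off-diagonal zeros.

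First I would introduce the scaffolding. Starting from the identity wire, I split it with \SpiderRule (run backwards) to expose a pair of legs and attach the H-box/Z-spider gadget supplied by the phase-free intro analogue, Lemma~\ref{lem:dyadic-intro} (\DyadIntroRule), tracking the $\tfrac{1}{\sqrt2}$ factors with \ScalarRule. This expresses the identity as an input copied by a Z-spider, an output copied by a Z-spider, and in between a small block of H-boxes wired to those spiders, with NOT gates marking which basis value each H-box selects --- exactly the bipartite shape a normal-form entry has.

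Next I would straighten out the connectivity of that block. Lemma~\ref{lem:splitting} (\SplitLemma) lets me split the white spiders sitting between several H-boxes, and Lemma~\ref{lem:ortho-bang} (\OrthoBangRule) lets me apply \OrthoRule even when H-boxes are joined by more than one wire; alternating these converts the block into the complete-bipartite pattern demanded by the normal form. The selector NOT gates are pushed through spiders using Lemma~\ref{lem:not-bang} (\NotBangRule) together with \NotCopyRule, and any NOT--NOT pairs produced along the way are annihilated with \NotRule, after which \SpiderRule and \HFuseRule fuse the residual structure.

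The step I expect to be the main obstacle is producing the two off-diagonal zeros of the identity: the phase-free calculus has no $0$-labelled H-box to drop in, so those entries must be created by genuine orthogonality, \ie by maneuvering the diagram into a configuration in which one copy of a selecting wire is negated and another is not, and then firing \OrthoRule (via \OrthoBangRule) to collapse it. Arranging the surrounding diagram into exactly the shape that licenses that collapse, without shedding or accruing spurious scalars, is the delicate bookkeeping; once the collapse has happened, the remaining work --- fusing spiders, absorbing stray Hadamards, and cleaning up with \ScalarRule --- is routine and yields the stated diagram.
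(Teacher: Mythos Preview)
Your plan is the paper's plan: seed the scaffolding from the bare wire via \DyadIntroRule, then reorganise the H-box block with \SplitLemma and \OrthoBangRule while shuttling the selector NOTs through using \NotBangRule, cleaning up scalars along the way. The paper's proof is precisely such a short chain of graphical rewrites drawing on Lemmas~\ref{lem:not-bang}--\ref{lem:dyadic-intro}, and the off-diagonal orthogonality step you single out as the tricky point is handled exactly by the \OrthoRule/\OrthoBangRule machinery you propose.
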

\begin{proof}~
\ctikzfig{id-schur-proof}
\end{proof}

And finally, we can prove \HadDoubleRule:

\begin{lemma}\label{lem:normal-dedup}~
\ctikzfig{normal-dedup}
\end{lemma}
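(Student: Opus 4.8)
The plan is to prove Lemma~\ref{lem:normal-dedup} --- which is exactly the statement of \HadDoubleRule --- by a normal-form argument that stitches together the lemmas \NotBangRule, \SplitLemma, \OrthoBangRule, \DyadIntroRule and \IdSchur accumulated earlier in this appendix, so that \AndRule is never invoked. The guiding idea is that, rather than manipulating the two interacting H-boxes locally, I would expand the diagram into the bipartite Z-spider/H-box scaffold that mimics the normal form of Ref.~\cite{backens2018zhcalculus}, perform the deduplication there, and then collapse the result back down.

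First I would apply \IdSchur to replace the relevant identity wire(s) by the normal-form-resembling diagram it provides. This surfaces a bipartite layer of white spiders connected to H-boxes, into which the two H-boxes appearing in \HadDoubleRule can be absorbed via \HFuseRule. At this stage the ``duplicated'' structure that the rule is meant to remove becomes a pair of H-boxes connected to the same white spiders, which is precisely the configuration the splitting machinery is designed to handle.

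Next I would use \SplitLemma to split the white spiders sitting between these H-boxes, and \OrthoBangRule to discharge the resulting multiply-connected configuration by a uniform application of \OrthoRule. This is the heart of the argument: the ortho rule forces the redundant copy to collapse, exactly as in the single-wire case treated by the simple \AndRule-based proof, but now carried out simultaneously across all of the bang-box wires. Any negations that appear along the way would be normalised using \NotBangRule, and the identity-like component needed to match the two sides would be reintroduced with \DyadIntroRule.

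Finally I would fuse the remaining spiders and H-boxes back together with \HFuseRule and \SpiderRule, clearing the accumulated scalars with \ScalarRule, to arrive at the right-hand side. The main obstacle I anticipate is the combinatorial bookkeeping of the bang-box wires: one must ensure that the applications of \OrthoBangRule eliminate precisely the duplicated connection --- carefully tracking which wires carry a negation and which do not --- since a misaligned ortho would collapse the wrong part of the scaffold and break the reduction. Keeping this alignment consistent through the splitting and re-fusing steps, without ever appealing to \AndRule, is where the real work lies.
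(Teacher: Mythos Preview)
Your proposal is correct and follows essentially the same route as the paper's own proof: the paper builds up precisely the five lemmas \NotBangRule, \SplitLemma, \OrthoBangRule, \DyadIntroRule and \IdSchur in this appendix for the sole purpose of proving \HadDoubleRule without \AndRule, and the argument proceeds by expanding via \IdSchur into the bipartite normal-form scaffold, performing the deduplication with the splitting/ortho machinery, and collapsing back. Your anticipated obstacle---the careful tracking of which wires carry negations through the \OrthoBangRule step---is indeed where the work concentrates.
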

\begin{proof}~
\ctikzfig{dedup-normal-condensed}
\end{proof}

\end{document}